\documentclass{article}
\usepackage[a4paper,margin=1in]{geometry}
\usepackage{setspace}
\usepackage{graphicx} 
\usepackage{amsmath}
\usepackage{mathtools}
\usepackage{amsthm}
\usepackage{tablefootnote}
\usepackage{braket}
\usepackage{amssymb}
\usepackage{mathrsfs}
\usepackage{amsfonts}
\usepackage{authblk}
\usepackage{makecell}
\usepackage{threeparttable}
\usepackage{bbding}
\usepackage{subfig}
\usepackage{dcolumn}
\usepackage{float}
\usepackage{comment}
\usepackage{multirow}
\usepackage{color}
\usepackage{tcolorbox}
\usepackage{algorithm}
\usepackage{algorithmic}
\usepackage{tikz}
\usepackage{appendix}
\usepackage[linktocpage=true, colorlinks,
linkcolor=blue,citecolor=blue,
bookmarks,bookmarksopen,bookmarksnumbered]
{hyperref}
\usepackage{circuitikz} 
\usepackage{tikz-3dplot} 
\usepackage{adjustbox}
\allowdisplaybreaks[4]

\newcommand{\rbra}[1]{\left( #1 \right)} 
\newcommand{\sbra}[1]{\left[ #1 \right]}
\newcommand{\cbra}[1]{\left\{ #1 \right\}}
\newcommand{\abs}[1]{\lvert #1 \rvert}
\newcommand{\Abs}[1]{\lVert #1 \rVert}

\theoremstyle{plain}

\newtheorem{theorem}{Theorem}[section]

\newtheorem{lemma}[theorem]{Lemma}

\theoremstyle{definition}
\newtheorem{definition}[theorem]{Definition}

\theoremstyle{remark}

\begin{document}
\title{Quantum Speedups for Testing Similar Means}
\author[1]{Chengshen Gao}
\author[2]{Yongzhen Xu}
\author[2]{Shenggen Zheng}
\author[1,2]{Lvzhou Li\thanks{Corresponding author: lilvzh@mail.sysu.edu.cn}}
\affil[1]{Institute of Quantum Computing and Computer Theory, School of Computer Science and Engineering, Sun Yat-sen University, Guangzhou 510006, China}
\affil[2]{Quantum Science Center of Guangdong-Hong Kong-Macao Greater Bay Area, Shenzhen, 518045, Guangdong, China}
\date{}

\maketitle

\begin{abstract}
    Property testing of distributions is a central topic in information theory, learning theory, and statistics. While quantum algorithms are known to offer significant speedups for property testing of a single distribution or a pair of distributions, it is unclear whether quantum algorithms provide speedups for property testing of $m$ ($m\geq 3$) distributions. This work focuses on quantum algorithms for testing whether $m$ distributions have similar means or are $\epsilon$-far from mean similarity under two models. In the query model, the algorithm can choose which distribution to sample from, whereas in the sampling model, the distributions are selected uniformly. We design quantum algorithms with complexities $\tilde{O}(1/\epsilon)$ (the $\tilde{O}$  notation hides poly-logarithmic factors) and $\tilde{O}(\sqrt{m}/\epsilon)$ in the query and sampling models, respectively, achieving quadratic speedups over the classical counterparts. We further establish quantum lower bounds of $\Omega\left(1/\epsilon\right)$ and $\Omega\rbra{m^{1/3}+\frac{m^{1/4}}{\epsilon}}$ for the query model and the sampling model, demonstrating the optimality of our quantum algorithms in terms of the dependence on $\epsilon$ up to logarithmic factors.
\end{abstract}

\section{Introduction}

One of the main motivations for work in quantum information science is the prospect of fast quantum algorithms to solve important computational problems \cite{montanaro2016quantum,zhang2022brief}. Since the first quantum algorithm, Deutsch's algorithm \cite{Deutsch1985}, was proposed in 1985,  significant efforts have been made in quantum algorithm design over the past forty years. Beyond the well-known quantum speedups in foundational problems such as factorization \cite{Shor1994}, unordered database search \cite{Grover1996}, and solving linear equations \cite{HHL2009}, quantum algorithms have also demonstrated advantages in solving combinatorial optimization problems \cite{jordan2025optimization}, NP search problems \cite{YamakawaZ22}, the welded tree problem \cite{ChildsCDFGS03,jeffery2023multidimensional,li2024recovering-a,li2024recovering-b}, spatial search problems \cite{ quadratic_20, quadratic_22, li2025quantum,li2026deterministic,wang2025unifying}, geometric problems \cite{li2025unbounded, nghiem2025quantum, andrejevs2026quantum}, string learning problems \cite{xu2025provable, qi2026quantum, xu2023quantum, AmbainisM14, CleveIGNTTY12}, problems with abstract structures (e.g., groups \cite{childs2010quantum, kuperberg2005subexponential,  watrous2001quantum,  magniez2007quantum, ye2022deterministic}, matroids \cite{huang2024quantum-a,huang2024quantum-b,huang2026quantum}), and others. 

In the context of distribution property testing, quantum algorithms are known to offer significant speedups for basic tasks such as uniformity testing and closeness testing \cite{chakraborty2010new,bravyi2011quantum,montanaro2015quantum,gilyen2020distributional,luo2024succinct,canonne_et_al:LIPIcs.TQC.2025.7}. Property testing of distributions~\cite{rubinfeld2012taming,canonne2020survey} is a central topic in information theory, learning theory, and statistics. More specifically, the algorithm is given sample access to one or more unknown distributions and aims to determine whether they satisfy a global property $\mathcal{P}$ or are far from satisfying it.

Property testing of quantum states can be viewed as a generalization of quantum distribution property testing, since classical distributions correspond to quantum states that are diagonal in the computational basis. Research on property testing of quantum states addresses a number of fundamental problems, notably quantum state discrimination\cite{chefles2000quantum,barnett2009quantum,bae2015quantum},  productness testing\cite{harrow2013testing,soleimanifar2022testing}, mixedness testing\cite{childs2007weak,o2015quantum} and closeness testing\cite{buhrman2001quantum,gilyen2022improved,liu2025quantum}.

In contrast to the aforementioned studies which focus on testing properties of one or two objects in the corresponding model, property testing for a \emph{collection} of $m$ distributions or quantum states constitutes a fundamentally important and significantly more challenging problem.  These considerations have already spurred investigation into several specific problems of classical distribution property testing, such as equivalence testing~\cite{levi2013testing, diakonikolas2016new, diakonikolas2021optimal}, clusterability testing~\cite{levi2013testing, levi2014testing}, and similar means testing~\cite{levi2014testing}. In the setting of quantum state property testing, equivalence testing for multiple quantum states has also been studied~\cite{yu2021sample,fanizza2023testing}.

However, to the best of our knowledge, no quantum algorithms are known for the task of testing similar means of a collection of distributions. Formally, testing similar means is defined as follows. Let $\mathcal{D} = (D_1,\ldots,D_m)$ be a collection of distributions over $\{0,\ldots,n\}$, and let $\gamma,\varepsilon > 0$ be parameters. 
We say that $\mathcal{D}$ has $\gamma$-similar means
if there exists an interval $I \subseteq \mathbb{R}$ of length at most $\gamma n$ such that $\mathbb{E}[D_i] \in I $ for all $ i \in [m]$. Conversely,  $\mathcal{D}$ is said to be $\epsilon$-far from having this property if  for every collection $\mathcal{D}^{*} = (D_1^{*},\ldots,D_m^{*})$ that has $\gamma$-similar means, it holds that $\frac{1}{m} \sum_{i=1}^m d(D_i, D_i^{*}) > \epsilon$, where $d(\cdot,\cdot)$ is some distance measure between distributions. The goal of testing $\gamma$-similar means is then to distinguish collections that have $\gamma$-similar means from those that are $\varepsilon$-far from this property. 

This problem has been fully characterized in the classical setting under two models \cite{levi2014testing}. In the query model, the algorithm may indicate an index $i\in\sbra{m}$ of its choice and receive a sample $j$ drawn according to the distribution $D_i$. In the sampling model, the algorithm obtains a pair $(i,j)$, where $i$ is selected uniformly and $j$ is drawn according to $D_i$. In terms of the query model, classical methods achieve a query complexity of $\tilde{O}\left(\frac{1}{\epsilon^2}\right)$ with a matching lower bound $\Omega\left(\frac{1}{\epsilon^2}\right)$, while their complexity in the sampling model has an upper bound of $m^{1-\Omega\rbra{\epsilon^2}}\cdot\tilde{O}\rbra{\frac{1}{\epsilon^2}}$ and a lower bound of $(1-\gamma)m^{1-\tilde{O}\left(\sqrt{\frac{\epsilon}{\gamma}}\right)}$ when $\epsilon=\Theta(1)$. This naturally leads us to the following fundamental question.
\begin{center}
    \emph{Can quantum algorithms speed up testing mean similarity?}
\end{center}

In this work, we answer this question in the affirmative by giving quantum algorithms for testing mean similarity of multiple distributions that provably outperform their classical counterparts in both the query and sampling models. To state our results, we consider two natural quantum access models: quantum query access and quantum sampling access, which generalize the classical query and sampling models, respectively. Formal definitions are given below.

\begin{table*}[!b]
\centering
\caption{Classical and Quantum Complexities in Query and Sampling Models}
\label{tab: summary}
{\footnotesize
\begin{tabular*}{\linewidth}{cccc}
	\hline
	\multicolumn{2}{c}{}&Classical\cite{levi2014testing} & Quantum \\
	\hline
	& & &  \\[-5pt]
	\multirow{4}*{Query model}&Upper bound&$\tilde{O}\rbra{\frac{1}{\epsilon^2}}$&$\tilde{O}\rbra{\frac{1}{\epsilon}}$(Theorem~\ref{thm: upper bound of Oracle A})\\[5pt]
	
	& & &  \\[-5pt]
	&Lower bound&$\Omega\rbra{\frac{1}{\epsilon^2}}$&$\Omega\rbra{\frac{1}{\epsilon}}$(Theorem~\ref{thm: lower bound of Oracle A})\\[5pt]
	
	& & &  \\[-5pt]
	\multirow{4}*{Sampling model}&Upper bound&$m^{1-\Omega\rbra{\epsilon^2}}\cdot\tilde{O}\rbra{\frac{1}{\epsilon^2}}$&$\tilde{O}\rbra{\frac{\sqrt{m}}{\epsilon}}$(Theorem~\ref{thm: upper bound of Oracle B})\\[5pt]
	
	& & &  \\[-5pt]
	&Lower bound&$\rbra{1-\gamma}m^{1-\tilde{O}\rbra{\sqrt{\frac{\epsilon}{\gamma}}}}$&$\Omega\rbra{m^{1/3}+\frac{m^{1/4}}{\epsilon}}$(Theorem~\ref{thm: lower bound of Oracle B})\\[5pt]
	\hline
\end{tabular*}
\\\vspace{1mm}\parbox{\textwidth}{Note: $m$ denotes the number of distributions $D_i$ in the collection. $n$ represents the size of the support of each distribution. $\gamma$ is a similarity parameter that controls the proximity of the means. $\epsilon$ quantifies the deviation from this similarity condition.}
}
\end{table*}

\begin{definition}[Quantum query access]
    A collection of distributions $\mathcal{D} = (D_1,\ldots,D_m)$ over a common domain $\sbra{0,...,n}$ is said to have quantum query access if we are given access to a unitary $O_A$ (together with its inverse and controlled versions)  that satisfies
    \begin{align}
        O_A\ket{i}_1\ket{\vec{0}}_{2,3}=\ket{i}_1\left(\sum_{j=0}^{n} \sqrt{D_i\rbra{j}} \ket{j}_2\ket{\varphi_{i,j}}_3\right),
    \end{align} 
    for every $i\in\sbra{m}$. Here each $D_i:\cbra{0,...,n}\to[0,1]$ is a probability distribution (i.e., $\sum_{j=0}^nD_i\rbra{j}=1$).  
    $\cbra{\ket{\varphi_{i,j}}}$ are arbitrary normalized quantum states obeying the orthogonality condition $\langle \varphi_{i,j}|\varphi_{i,l}\rangle=\delta_{jl}$.
\end{definition}

\begin{definition}[Quantum sampling access]
   A collection of distributions $\mathcal{D} = (D_1,\ldots,D_m)$ over a common domain $\sbra{0,...,n}$ is said to have quantum sampling access if we are given access to a unitary $O_B$ (including its inverse and controlled versions) that fulfills
    \begin{align}
    \begin{split}
        &O_B\ket{\Vec{0}}_{1}\ket{\Vec{0}}_{2,3}=\\
        &\sum_{i=1}^{m}\frac{1}{\sqrt{m}}\ket{i}_1\left(\sum_{j=0}^n \sqrt{D_i\rbra{j}}\ket{j}_2\ket{\varphi_{i,j}}_3\right), 
    \end{split}
    \end{align}
where $\cbra{\ket{\varphi_{i,j}}}$ denote arbitrary normalized quantum states satisfying $\langle \varphi_{i,j}|\varphi_{i,l}\rangle=\delta_{jl}$.
\end{definition}

Under the two quantum access models defined above, we summarize our main results below, see also Table~\ref{tab: summary}.

\noindent  \textbf{Results in the quantum query access model.}
In the \emph{query model}, we prove that our algorithm attains a quadratic speedup in terms of $\epsilon$, achieving a query complexity of $\tilde{O}\rbra{\frac{1}{\epsilon}}$. Furthermore, we establish a lower bound of $\Omega\rbra{\frac{1}{\epsilon}}$, showing that the algorithm is asymptotically optimal up to logarithmic factors.

\noindent  \textbf{Results in the quantum sampling access model.}
For the \emph{sampling model}, we give a quantum algorithm that attains a quadratic speedup in the dependence on both the accuracy parameter $\epsilon$ and the collection size $m$, with a query complexity of $\tilde{O}\rbra{\frac{\sqrt{m}}{\epsilon}}$. For the lower bound, we also prove a lower bound of $\Omega\rbra{m^{1/3}+\frac{m^{1/4}}{\epsilon}}$ for this setting.

\section{Preliminaries} \label{Preliminaries}

\subsection{Notations}

$\mathcal{D} = (D_1,\ldots,D_m)$ is a collection of distributions over $\{0,\ldots,n\}$, where $D_i:\cbra{0,...,n}\rightarrow[0,1]$ and $\sum_{j=0}^nD_i\rbra{j}=1$. The distance measure $d\rbra{\cdot,\cdot}$ is the $\ell_1$ distance in this work, where the $\ell_\alpha$ distance is defined as 
\begin{align*}
    d\rbra{D_1,D_2}:=\rbra{\sum_{j=0}^n\abs{D_1\rbra{j}-D_2\rbra{j}}^\alpha}^\frac{1}{\alpha}.
\end{align*}
For a distribution $D$ over $\{0,\ldots,n\}$, the mean of $D$ is defined as $\mu(D) := \sum_{i=0}^{n} i \cdot D(i)$ and we denote $\mu_i$ as $\mu\rbra{D_i}$. For a value $0 \le z \le n$, let $d_1(D,z) := \min_{D' : \mu(D') = z} \, \lVert D - D' \rVert_1$ denote the minimum $\ell_1$ distance between $D$ and a distribution that has mean $z$.

\subsection{Testing Similar Means}
Here, we formally define the fundamental concepts introduced in \cite{levi2014testing} and present several useful lemmas for testing similar means.

\begin{definition}[Extremal feasible means]
    \label{def: extremal mean}
    For a distribution $D$ and a distance parameter $0 \le \beta \le 1$, the quantities $\mu_\beta^{>}(D)$ and $\mu_\beta^{<}(D)$ capture the largest and smallest means attainable by any distribution within $\ell_1$-distance $\beta$ of $D$.
    \begin{itemize}
        \item Define $\mu_\beta^{>}(D)$ as follows: If $d_1(D, n) \ge \beta$, then let $\mu_\beta^{>}(D)$ be the value $\mu>\mu(D)$ such that $d_1(D, \mu) = \beta$. Otherwise, set $\mu_\beta^{>}(D) = n$.
        
        \item Define $\mu_\beta^{<}(D)$ as follows: If $d_1(D, 0) \ge \beta$, then let $\mu_\beta^{<}(D)$ be the value $\mu < \mu(D)$ such that $d_1(D, \mu) = \beta$. Otherwise, set $\mu_\beta^{<}(D) = 0$.
    \end{itemize}
\end{definition}

\begin{lemma}\label{lem: distance sum}
    $\mu_\beta^{>}(D)=\sup\{\mu(D'):\|D'-D\|_1\leq \beta\}$, $\mu_\beta^{<}(D)=\inf\{\mu(D'):\|D'-D\|_1\leq \beta\}$. 
\end{lemma}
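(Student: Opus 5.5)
By symmetry (reflect the domain via $j\mapsto n-j$, which sends means $\mu\mapsto n-\mu$ and preserves $\ell_1$ distances), it suffices to prove the statement for $\mu_\beta^{>}$. Write $S:=\sup\{\mu(D'):\|D'-D\|_1\le\beta\}$. The feasible set is compact and $\mu(\cdot)$ is linear, so the supremum is attained.

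The plan rests on three structural facts about $g(z):=d_1(D,z)$ on $[0,n]$.

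\emph{Convexity.} If $D_1,D_2$ attain $d_1(D,z_1)$ and $d_1(D,z_2)$, then $\lambda D_1+(1-\lambda)D_2$ has mean $\lambda z_1+(1-\lambda)z_2$. Its distance to $D$ is at most $\lambda g(z_1)+(1-\lambda)g(z_2)$ by the triangle inequality. Hence $g$ is convex.

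\emph{Zero and continuity.} We have $g(\mu(D))=0$, and a convex function that is finite on the closed interval $[0,n]$ is continuous there. Finiteness is clear since $g\le 2$. Continuity at the endpoints can also be checked directly: moving mass to the point $n$ is an explicit construction.

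\emph{Strict monotonicity.} On $[\mu(D),n]$, $g$ is nondecreasing, by convexity and $g(\mu(D))=0$. Moreover, it is strictly increasing wherever $g>0$: if $g(z_1)=g(z_2)=c>0$ with $z_1<z_2$, then convexity together with $g(\mu(D))=0$ forces $g(z_1)\le \frac{z_1-\mu(D)}{z_2-\mu(D)}\,c<c$, a contradiction.

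\emph{Case $d_1(D,n)\ge\beta$.} By continuity and monotonicity there is a unique $\mu^*>\mu(D)$ with $g(\mu^*)=\beta$; this is $\mu_\beta^{>}(D)$, and the definition is well posed. The minimizer $D^*$ for $g(\mu^*)$ is feasible, so $S\ge\mu^*$. Conversely, suppose some feasible $D'$ had $\mu(D')=z>\mu^*$. Then $g(z)\le\|D'-D\|_1\le\beta=g(\mu^*)$. Monotonicity gives $g(z)\ge\beta$, so $g(z)=g(\mu^*)=\beta>0$ with $z>\mu^*$, contradicting strict monotonicity. Hence $S=\mu^*$. (If $\beta=0$ the statement is trivial, since the feasible set is $\{D\}$.)

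\emph{Case $d_1(D,n)<\beta$.} A minimizer for $g(n)$ is feasible and has mean $n$. No distribution has mean larger than $n$, so $S=n=\mu_\beta^{>}(D)$.

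The main obstacle is making the definition's ``the value $\mu$'' rigorous, that is, existence and uniqueness of the crossing point. The strict monotonicity argument above handles both, and it needs convexity of $g$, which is the key step. An alternative is the explicit greedy description: the optimal $D'$ moves $\beta/2$ mass from the lowest support points to $n$. This gives $g$ in closed form but is messier, so I would use the convexity route.
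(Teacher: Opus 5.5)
Your proof is correct and follows the same route as the paper. Both use convexity and continuity of $g(z)=d_1(D,z)$, monotonicity on $[\mu(D),n]$, an intermediate-value crossing at level $\beta$, and the fact that any larger mean forces distance exceeding $\beta$. You are more complete than the paper: it asserts convexity and continuity without proof, sidesteps uniqueness by taking the largest crossing point, and never treats the case $d_1(D,n)<\beta$.

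One small fix concerns continuity at $z=n$. A finite convex function on a closed interval can jump up at an endpoint, and a mass-moving construction only bounds $g$ from above near $n$, which convexity already gives. To guarantee the crossing exists, note instead that every $D'$ with mean $z$ has $D'(n)\ge z-n+1$. Hence $g(z)\ge 2\bigl(z-n+1-D(n)\bigr)\to 2\bigl(1-D(n)\bigr)=g(n)$ as $z\to n$.
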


\begin{proof}
     We prove the statement for $\mu_\beta^{>}(D)$. The argument for $\mu_\beta^{<}(D)$ is analogous. Define the function $g(z) := d_1(D, z)$. Then we observe that:
\begin{enumerate}
    \item $g(\mu(D)) = 0$, and $g(z) \ge 0$ for all $z$;
    \item $g$ is a continuous and convex function, and hence is non-decreasing for all $z \ge \mu(D)$.
\end{enumerate}
Therefore, there exists some $z \in [\mu(D), n]$ such that $g(z) = \beta$.
Let $z^* := \max \{ z : g(z) = \beta \}$. For any $z > z^*$ and any distribution $D'$ with $\mu(D') = z$, we have $\lVert D' - D \rVert_1 \ge g(z) > g(z^*) = \beta.$
It follows that $z^*=\sup \{ \mu(D') : \lVert D' - D \rVert_1 \le \beta \}$
\end{proof}
Lemma~\ref{lem: distance sum} provides an equivalent characterization of the extremal range of means achievable under $\ell_1$-bounded perturbations of $D$, as captured by Definition~\ref{def: extremal mean}. This viewpoint turns distributional distance into a convenient tool for analyzing mean deviations, and will be used repeatedly in the following sections. Building on this, the next lemma characterizes collections that are $\varepsilon$-far from having $\gamma$-similar means.

\begin{lemma}[Lemma 2 of \cite{levi2014testing}]\label{lem: property of far from}
Let $\mathcal{D}$ be a collection of distributions.
If $\mathcal{D}$ is $\epsilon$-far from having $\gamma$-similar means,
then there exists an interval $[x,y]\subseteq [n]$ with $y-x=\gamma n$
such that
\begin{align}
\begin{split}
    \sum_{i:\,\mu(D_i)>y} d_1(D_i,y) > (\epsilon/4)m,\\
    \sum_{i:\,\mu(D_i)<x} d_1(D_i,x) > (\epsilon/4)m .
\end{split}
\end{align}
\end{lemma}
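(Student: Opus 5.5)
The natural route is a sliding-window (continuity) argument on the family of length-$\gamma n$ windows. For $x\in[0,(1-\gamma)n]$ put $y=x+\gamma n$ and define
\begin{align*}
F(x) := \sum_{i:\,\mu(D_i)>x+\gamma n} d_1(D_i,x+\gamma n), \qquad G(x) := \sum_{i:\,\mu(D_i)<x} d_1(D_i,x).
\end{align*}
Equivalently, $F(x)=\sum_i g_i(x+\gamma n)\,\mathbf{1}[\mu_i>x+\gamma n]$ with $g_i(z)=d_1(D_i,z)$. By the convexity and continuity of $g_i$ used in Lemma~\ref{lem: distance sum}, and since $g_i(\mu_i)=0$, each summand is continuous and non-increasing in $x$. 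So $F$ is continuous and non-increasing, and by the same reasoning $G$ is continuous and non-decreasing. The goal is to find a single $x$ with $F(x)>\epsilon m/4$ and $G(x)>\epsilon m/4$.

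\textbf{Step 1 (cost upper bound by contraposition).} For any window $[x,y]$, build a collection with $\gamma$-similar means as follows. For each $i$ with $\mu_i>y$, move $D_i$ to a closest distribution of mean exactly $y$, which costs $d_1(D_i,y)$. For each $i$ with $\mu_i<x$, move it to mean $x$ at cost $d_1(D_i,x)$. Leave all other $D_i$ unchanged. All resulting means lie in $[x,y]$, so
\begin{align*}
\frac{1}{m}\sum_i d(D_i,D_i^*) \le \frac{F(x)+G(x)}{m},
\end{align*}
and $\epsilon$-farness gives $F(x)+G(x)>\epsilon m$ for every admissible $x$.

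\textbf{Step 2 (balancing).} Let $x^*$ be a point where $F(x^*)=G(x^*)$. Such a point exists by the intermediate value theorem: at $x=0$ we have $G=0$, and at $x=(1-\gamma)n$ we have $F=0$ because no mean exceeds $n$. At $x^*$ both sums exceed $\epsilon m/2>\epsilon m/4$, which gives the claim, with room to spare in the constant. The statement writes $[x,y]\subseteq[n]$; if integrality of the endpoints is intended, I would round $x^*$ and absorb the error using the Lipschitz behaviour of $g_i$, and that is where the factor $4$ rather than $2$ leaves slack.

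\textbf{Main obstacle.} I expect the delicate step to be verifying continuity of $F$ and $G$ across the thresholds where some $\mu_i$ crosses $x$ or $y$. This holds because $g_i$ vanishes exactly at $\mu_i$, so terms enter and leave the sums at value $0$. The other point to check is that $g_i$ is genuinely continuous. That follows from convexity of $g_i$ on the compact range $[0,n]$; finiteness of $g_i$ at the endpoints comes from the point masses at $0$ and $n$. If only the weaker integer-endpoint version is needed, a discrete argument works instead: take the largest $x$ with $G(x)\le F(x)$ and compare neighbouring windows.
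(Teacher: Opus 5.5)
The paper does not prove this statement; it imports it as Lemma~2 of \cite{levi2014testing}, so there is no in-paper argument to compare against. Your sliding-window argument is a correct, self-contained proof.

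\begin{itemize}
\item Step~1 is exactly the contrapositive use of $\epsilon$-farness. A minimizer in $d_1(D_i,\cdot)$ exists because $\{D':\mu(D')=z\}$ is compact.
\item $F-G$ is continuous on $[0,(1-\gamma)n]$. Each summand equals $g_i(\min(y,\mu_i))$ (for $F$) or $g_i(\max(x,\mu_i))$ (for $G$), and $g_i(\mu_i)=0$.
\item The intermediate value theorem then gives $F(x^*)=G(x^*)>\epsilon m/2$, which is stronger than the stated $\epsilon m/4$.
\end{itemize}

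The only external input is convexity and continuity of $z\mapsto d_1(D,z)$, which the paper itself asserts in the proof of Lemma~\ref{lem: distance sum}. Monotonicity of $F$ and $G$ is never used.

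Two small corrections.

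First, convexity and finiteness alone do not give continuity at the endpoints $0$ and $n$, since a convex function can jump upward there. Nothing breaks, though. Because $\gamma>0$, the arguments $\min(y,\mu_i)$ and $\max(x,\mu_i)$ approach $0$ or $n$ only when that endpoint equals $\mu_i$. There convexity gives $0\le\limsup g_i\le g_i(\mu_i)=0$. Lower semicontinuity via compactness would also rule out the jump.

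Second, drop the integrality digression. $[n]$ can only mean the real interval $[0,n]$ here, since $y-x=\gamma n$ need not be an integer. The rounding fix you sketch would also fail, because $g_i$ is not Lipschitz at unit scale. For the distribution with mass $\tfrac12$ on each of $n-1$ and $n$, one has $g(n-\tfrac12)=0$ while $g(n)=1$. So rounding can shift the sums by $\Theta(m)$, far more than the $\epsilon m/4$ slack.
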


\subsection{Quantum Subroutines}
We first introduce quantum amplitude estimation, which estimates a probability with quadratic speedup over classical sampling.
\begin{lemma}[Quantum amplitude estimation, Theorem 12 in \cite{brassard2002quantum}]\label{lem: amplitude estimation}
    $A$ is a unitary operator and $P$ is a projection operator, such that $A|0\rangle = \sqrt{a}|\Psi_0\rangle + \sqrt{1-a}|\Psi_1\rangle$, where $a \in [0,1]$ and $P|\Psi_0 \rangle=|\Psi_0 \rangle$, $P|\Psi_1 \rangle=0$. For an integer $t > 0$ and $\delta\in\rbra{0,1}$, the quantum amplitude estimation algorithm $\text{QAE}\rbra{A\ket{0},P,t,\delta}$ outputs $\ket{\hat{a}}$. Upon measuring this output, the measurement result $\hat{a}$ satisfies
    \begin{align*}
        \Pr\left[|\hat{a} - a| \leq \frac{\sqrt{a(1-a)}}{t} + \frac{1}{t^2}\right] \geq 1-\delta.
    \end{align*}
    The algorithm uses $O(t\log\frac{1}{\delta})$ applications of $A$ and $P$.
\end{lemma}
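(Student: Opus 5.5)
The statement just above is the quantum amplitude estimation bound of Brassard, H{\o}yer, Mosca and Tapp. I would prove it by reducing estimation of $a$ to phase estimation of the Grover-type operator $Q = -A S_0 A^{-1} S_P$. Here $S_0 = I - 2\ket{0}\bra{0}$ and $S_P = I - 2P$.

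\textbf{Step 1: the two-dimensional invariant subspace.} Write $a=\sin^2\theta_a$ with $\theta_a\in[0,\pi/2]$. Then $A\ket{0}=\sin\theta_a\ket{\Psi_0}+\cos\theta_a\ket{\Psi_1}$. On $\mathrm{span}\{\ket{\Psi_0},\ket{\Psi_1}\}$ the operator $Q$ is a rotation by angle $2\theta_a$. Its eigenvectors are $\ket{\psi_\pm}=\frac{1}{\sqrt2}(\ket{\Psi_0}\pm i\ket{\Psi_1})$, with eigenvalues $e^{\pm 2i\theta_a}$. The starting state $A\ket{0}$ is a superposition of $\ket{\psi_+}$ and $\ket{\psi_-}$. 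This is a direct calculation using $S_P\ket{\Psi_0}=-\ket{\Psi_0}$ and $S_P\ket{\Psi_1}=\ket{\Psi_1}$.

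\textbf{Step 2: phase estimation.} Run phase estimation with $M = \Theta(t)$ controlled applications of $Q$, using a register of size $M$ and an inverse quantum Fourier transform. Each application of $Q$ costs one call to $A$, one to $A^{-1}$ and one reflection about $P$. The standard analysis gives an output $y$ for which $\tilde\theta=\pi y/M$ satisfies $|\tilde\theta-\theta_a|\le \pi/M$ with probability at least $8/\pi^2$. The sign ambiguity from the two eigenvectors is harmless, because $\sin^2$ is invariant under $\theta\mapsto\pi-\theta$. Output $\hat a=\sin^2\tilde\theta$.

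\textbf{Step 3: error propagation.} This is the step needing care. Take $\tilde\theta=\theta_a+\Delta$ with $|\Delta|\le\pi/M$. Then
\begin{align*}
\sin^2\tilde\theta-\sin^2\theta_a=\sin(2\theta_a+\Delta)\sin\Delta .
\end{align*}
Bounding $|\sin(2\theta_a+\Delta)|\le 2\sqrt{a(1-a)}+|\Delta|$ gives
\begin{align*}
|\hat a-a|\le 2\pi\frac{\sqrt{a(1-a)}}{M}+\frac{\pi^2}{M^2}.
\end{align*}
Choosing $M=\lceil c\,t\rceil$ for a suitable constant $c$ turns this into $\sqrt{a(1-a)}/t+1/t^2$, at cost $O(t)$ applications of $A$ and $P$.

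\textbf{Step 4: boosting the confidence.} Repeat the constant-success procedure $O(\log(1/\delta))$ times and output the median of the estimates. A Chernoff bound shows that more than half the runs succeed except with probability $\delta$. When more than half succeed, the median lies inside the success interval, since that interval is the same fixed interval around $a$ for every run. The total cost is $O(t\log(1/\delta))$.

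The main obstacle is Step 2: the precise phase-estimation tail bound when $\theta_a M/\pi$ is not an integer. I would take it from the standard QFT analysis rather than rederive it.
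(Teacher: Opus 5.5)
Your proposal is correct and follows the same route as the source: the paper gives no proof of its own and only cites Theorem~12 of Brassard--H{\o}yer--Mosca--Tapp, which is exactly your phase estimation of $Q=-AS_0A^{-1}S_P$ together with the bound $|\sin^2(\theta_a+\Delta)-\sin^2\theta_a|\le 2\pi\sqrt{a(1-a)}/M+\pi^2/M^2$, holding with probability $8/\pi^2$. Your Steps 3--4, namely taking $M=\lceil 2\pi t\rceil$ and outputting the median of $O(\log(1/\delta))$ runs, supply the rescaling and confidence amplification that the paper's statement leaves implicit, since the cited theorem by itself only gives constant success probability.
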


Next, we present quantum mean estimation which can estimate the mean of a random variable that has the quantum codes, and is quadratically faster than classical sampling.

\begin{lemma}[Quantum mean estimation, Theorem 1.1 in \cite{kothari2023mean}]\label{lem: mean estimation}
    Let $O_\Omega$ and $O_X$ be the quantum codes of a random variable $X$ on a finite probability space such that 
\begin{align*}
\begin{split}
    &O_\Omega\ket{\Vec{0}}=\sum_{\omega\in\Omega}\sqrt{p(\omega)}\ket{\omega}\ket{\psi_\omega},\\
    &O_X\ket{\omega}\ket{\Vec{0}}=\ket{\omega}\ket{x_\omega}\ket{\Vec{0}},
\end{split}
\end{align*}
where $\Omega$ is the domain of the probability distribution of $X$, $p:\Omega \to\sbra{0,1}$ 
is the probability distribution of $X$ satisfying $p(\omega) \ge 0$ for all 
$\omega \in \Omega$ and $\sum_{\omega \in \Omega} p(\omega) = 1$, and $\cbra{\ket{\psi_\omega}}$ are arbitrary normalized quantum states. Given $\epsilon,\delta\in\rbra{0,1}$, the quantum mean estimation algorithm QME$\rbra{O_\Omega, O_X,\epsilon,\delta}$ outputs an estimation $\tilde{\mu}$ such that $\Pr[|\tilde{\mu}-\mu|\geq \sigma\epsilon]\leq \delta$, where $\mu=\mathbb{E}\sbra{X}$ and $\sigma$ are the mean and the standard deviation of the random variable $X$, using $O\rbra{\frac{\log\frac{1}{\delta}}{\epsilon}}$ applications of $O_X$ and $O_X^\dagger$.
\end{lemma}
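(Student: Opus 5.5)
The plan is to reduce quantum mean estimation to phase estimation of a Grover-type operator whose eigenphase encodes the (shifted) mean linearly. We handle heavy tails by truncation and the unknown $\sigma$ by a doubling search. The failure probability is then driven down to $\delta$ by a median trick, which accounts for the $\log\frac{1}{\delta}$ factor in the statement.

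\emph{Step 1 (centering).} Prepare $O_\Omega\ket{\vec 0}$ and measure it a constant number of times, applying $O_X$ to read off values. This yields $O(1)$ classical samples of $X$. By Chebyshev's inequality, their median $\tilde m$ satisfies $|\tilde m-\mu|\le 2\sigma$ with probability at least $2/3$. Set $Y=X-\tilde m$, so that $\mathbb{E}[Y^2]\le 5\sigma^2$; a reversible subtraction circuit composed with $O_X$ gives a code for $Y$. Without knowing $\sigma$, we also obtain a constant-factor proxy $\hat\sigma$ from the spread of the samples. Where this proxy may be off, we run the subsequent steps over the scales $\hat\sigma 2^{k}$, keeping the total cost geometric so that it is still $O(1/\epsilon)$.

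\emph{Step 2 (truncation and phase encoding).} Replace $Y$ by $Y'=Y\cdot\mathbf 1\{|Y|\le \sigma/\epsilon\}$. The bias this introduces satisfies $|\mathbb{E}[Y]-\mathbb{E}[Y']|\le \mathbb{E}[|Y|\mathbf 1\{|Y|>\sigma/\epsilon\}]\le \epsilon\,\mathbb{E}[Y^2]/\sigma=O(\epsilon\sigma)$, which is acceptable. Next, using $O_X$, $O_X^\dagger$ and a controlled rotation, implement the phase oracle $\ket{\omega}\mapsto e^{i s Y'(\omega)}\ket{\omega}$ with $s\approx \epsilon/\sigma$, so that $|sY'|\le 1$. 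Write $\ket{\psi_0}=O_\Omega\ket{\vec 0}$ and let $\ket{\psi_s}$ denote $\ket{\psi_0}$ with this phase applied. Then
\begin{align*}
\langle\psi_0|\psi_s\rangle=\mathbb{E}\big[e^{isY'}\big]=1+is\,\mathbb{E}[Y']-\tfrac{s^2}{2}\mathbb{E}[Y'^2]+O\big(s^3\mathbb{E}[|Y'|^3]\big).
\end{align*}
The argument of this inner product is $s\,\mathbb{E}[Y']$ up to an error of order $s^3\mathbb{E}[|Y'|^3]\le s^3(\sigma/\epsilon)\cdot 5\sigma^2=O(\epsilon^2 s)$. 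This error is negligible compared with the target accuracy $s\cdot\epsilon\sigma$ in phase.

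\emph{Step 3 (phase estimation).} Take the product of the reflection about $\ket{\psi_0}$ and the phase oracle. Its action on the two-dimensional span of $\ket{\psi_0}$ and $\ket{\psi_s}$ has eigenphases determined by $\langle\psi_0|\psi_s\rangle$. Applying phase estimation with $O(1/\epsilon)$ uses therefore recovers $\arg\langle\psi_0|\psi_s\rangle$ to additive error $O(\epsilon s)$. Dividing by $s$ yields $\mathbb{E}[Y']$, and hence $\mu$, to error $O(\epsilon\sigma)$. The median of $O(\log\frac1\delta)$ independent repetitions boosts the success probability to $1-\delta$.

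The \textbf{main obstacle} is Step 3. The small-$s$ linearization must remain valid for every scale tried in the doubling search, and the second-order term $s^2\mathbb{E}[Y'^2]$ alters the modulus of the overlap, and thus the effective rotation angle of the operator. I would first try to cancel this term by symmetrizing, that is, by also running the procedure with $-s$ and combining the two phase estimates. If that fails, I would fall back on the finer multi-scale (dyadic-band) analysis of \cite{kothari2023mean}, which is what removes the extra logarithmic factors that a cruder argument would incur.
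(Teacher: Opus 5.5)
\paragraph{Gaps in the proposal.} The paper does not prove this lemma. It quotes Theorem~1.1 of Kothari--O'Donnell, with the standard median trick supplying the $\log\frac{1}{\delta}$ factor. Your sketch therefore has to stand on its own, and as written it does not achieve the claimed rate.

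\emph{Step~3 gives no speedup.} Truncating at $\sigma/\epsilon$ forces $s\approx\epsilon/\sigma$. Then, as you say, a mean error of $\epsilon\sigma$ corresponds to a phase error of $s\epsilon\sigma=\epsilon^2$. Phase estimation with $T$ applications of a unitary resolves an eigenphase only to $\Theta(1/T)$. So your procedure needs $T=\Theta(1/\epsilon^2)$, which is no better than classical sampling. This is hidden in the claim that $O(1/\epsilon)$ uses give error ``$O(\epsilon s)$'', which has units of $1/X$ and is not a phase. (Likewise, the cubic remainder in Step~2 is $O(\epsilon^2)$, the same order as the target, not $O(\epsilon^2 s)$; that slip is harmless.)

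\emph{Enlarging $s$ does not fix it.} Take a variable equal to $\sigma/(2\epsilon)$ with probability $4\epsilon^2$ and $0$ otherwise. It has second moment $\sigma^2$ and mean $2\epsilon\sigma$, so the truncation level must be $\Omega(\sigma/\epsilon)$. That value's phase $s\sigma/(2\epsilon)$ must then stay $O(1)$; otherwise it wraps around the circle and its linear contribution to the eigenphase is lost. Getting $O(1/\epsilon)$ therefore needs a genuinely multi-scale treatment of the range $[\sigma,\sigma/\epsilon]$. Each scale gets its own phase scale and exploits its small probability ($O(4^{-k})$ for values near $2^k\sigma$, by Chebyshev), and the scales must be recombined without logarithmic loss. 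That is where the substance of the cited theorem lies. It is not a fallback for the second-order term, and your proposal leaves it unproved.

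\emph{The two-dimensional reduction is false.} Your operator is the right object, but the claimed invariant subspace does not exist. With $\mathrm{Refl}=2\ket{\psi_0}\bra{\psi_0}-I$, the operator $W=\mathrm{Refl}\cdot\mathrm{Rot}_s$ does not preserve $\mathrm{span}\{\ket{\psi_0},\ket{\psi_s}\}$. It maps $\ket{\psi_s}$ to $2\langle\psi_0|\psi_{2s}\rangle\ket{\psi_0}-\ket{\psi_{2s}}$, where $\ket{\psi_{2s}}=\mathrm{Rot}_s^2\ket{\psi_0}$. This lies in the span only if the phase takes at most two distinct values, as with $\pm1$ in ordinary Grover. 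Moreover, a rotation angle fixed by the overlap, as for a product of two reflections, would depend only on $\abs{\mathbb{E}[e^{isY'}]}$. That quantity is invariant under $Y'\mapsto Y'+c$, so it says nothing about the mean.

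What does hold, and must be proved, is a spectral statement on the span of the states $\ket{\omega}\ket{\psi_\omega}$. The eigenphases of $W$ visible from $\ket{\psi_0}$ are exactly the roots $\theta$ of $\mathbb{E}[\tan((\theta-sY')/2)]=0$. If $\abs{sY'}$ is bounded by a small constant, there is a unique root $\theta^*$ away from $\pi$, with $\theta^*=s\mathbb{E}[Y']+O(s^3\mathbb{E}\abs{Y'-\mathbb{E}Y'}^3)$. Furthermore, $\ket{\psi_0}$ has squared overlap $1-O(s^2\mathrm{Var}(Y'))$ with the corresponding eigenvector.

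\emph{Step~1's proxy for $\sigma$ is invalid.} $O(1)$ samples give no constant-factor proxy for $\sigma$. If $X=p^{-1/2}$ with probability $p$ and $0$ otherwise, then $\sigma\approx 1$, yet for small $p$ every sample is $0$. So $\sigma/\hat\sigma$ is unbounded, and your doubling over $\hat\sigma 2^k$ has neither a bounded number of rounds nor a rule for selecting the right one. Not needing any a priori bound on $\sigma$ is one of the advertised improvements of the cited result.

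\emph{The statement itself.} With ``$\geq$'' the lemma as written fails when $\sigma=0$, since the event $\abs{\tilde\mu-\mu}\geq 0$ is certain. The failure event should be $\abs{\tilde\mu-\mu}>\sigma\epsilon$.
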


We next describe the exact amplitude amplification algorithm that deterministically prepares the target state if the amplitude of the target state is known.

\begin{lemma}[Quantum deterministic amplitude amplification\cite{brassard2002quantum,hoyer2000arbitrary,long2001grover,li2023deterministic}]\label{lem: deterministic amplitude amplification}
Suppose there exists a unitary $U$ such that
\[
U\ket{\vec{0}} = \sqrt{a}\ket{\psi_0} + \sqrt{1-a}\ket{\psi_1},
\]
where \(a>0\) is known and \(\ket{\psi_0}, \ket{\psi_1}\) are orthogonal quantum states.  
Then there exists a quantum algorithm that deterministically prepares the state \(\ket{\psi_0}\) using \(\Theta(1/\sqrt{a})\) applications of \(U\) and \(U^{\dagger}\).
\end{lemma}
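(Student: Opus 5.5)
The plan is to use the standard Grover-rotation picture and then remove the rounding error by slightly shrinking the initial amplitude with one ancilla qubit. As is implicit in the statement, I assume we can implement the reflection $S_\chi = I - 2\Pi_0$, where $\Pi_0$ projects onto the subspace containing $\ket{\psi_0}$. This is the same kind of projector access as the $P$ used in Lemma~\ref{lem: amplitude estimation}.

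\textbf{Step 1 (Grover rotation).} Write $a=\sin^2\theta$ with $\theta\in(0,\pi/2]$. Define the iterate
\[
G=-U S_{0} U^{\dagger} S_\chi, \qquad S_0 = I-2\ket{\vec 0}\bra{\vec 0}.
\]
On the two-dimensional invariant plane spanned by $\ket{\psi_0},\ket{\psi_1}$, $G$ acts as a rotation by $2\theta$. A routine check of the $2\times 2$ matrix then gives
\[
G^kU\ket{\vec 0}=\sin((2k+1)\theta)\ket{\psi_0}+\cos((2k+1)\theta)\ket{\psi_1}.
\]
If $\pi/(4\theta)-1/2$ happens to be an integer $k$, we are done after $k=O(1/\theta)=O(1/\sqrt a)$ iterations. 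In general it is not, and this is the only real obstacle.

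\textbf{Step 2 (exact angle by dilution).} Set $k^\ast=\lceil \pi/(4\theta)-1/2\rceil$ and $\theta'=\pi/(2(2k^\ast+1))$. Since $(2k^\ast+1)\theta\ge \pi/2$, we have $\theta'\le\theta$, so the number $b=\sin^2\theta'/\sin^2\theta$ lies in $(0,1]$ and is computable because $a$ is known. Add one ancilla qubit and the single-qubit rotation $R\ket 0=\sqrt{b}\ket 1+\sqrt{1-b}\ket 0$. Let $U'=U\otimes R$, and take the good subspace to be $\ket{\psi_0}\otimes\ket 1$; its reflection is a controlled version of $S_\chi$. Then
\[
U'\ket{\vec 0}\ket 0=\sqrt{a b}\,\ket{\psi_0}\ket 1+\sqrt{1-ab}\,\ket{\psi'_1},
\]
with $ab=\sin^2\theta'$ and $\ket{\psi'_1}$ orthogonal to the good state.

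Applying Step 1 to $U'$ with exactly $k^\ast$ iterations rotates the angle to $(2k^\ast+1)\theta'=\pi/2$. The output is therefore exactly $\ket{\psi_0}\ket 1$. Because this is a product state, discarding the ancilla (which is deterministically $\ket 1$) leaves $\ket{\psi_0}$ with certainty.

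\textbf{Step 3 (cost).} The algorithm uses $2k^\ast+1$ applications of $U$ and $U^\dagger$. Since $\theta\ge\sqrt a$, this is $O(1/\theta)=O(1/\sqrt a)$. For the matching $\Omega(1/\sqrt a)$ part of the $\Theta$, I would simply invoke the known optimality of amplitude amplification, namely the BBBV/Grover search lower bound: exactly preparing $\ket{\psi_0}$ in particular finds a marked item of weight $a$.

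Alternatively, one could avoid the ancilla by using Long's phase-matched iterations. This replaces $S_0$ and $S_\chi$ with phase multiplications $e^{i\phi}$ for a suitably computed $\phi$, keeping $k^\ast$ iterations. I would mention this only as an alternative route; the dilution argument above is cleaner to verify.
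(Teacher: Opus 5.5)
Your proof is correct. The paper does not prove this lemma itself but only cites it, and your ancilla-dilution argument (shrink the success amplitude to $\sin^2\theta'$ with $\theta'=\pi/(2(2k^\ast+1))$ so that exactly $k^\ast$ iterations land on $\ket{\psi_0}\ket{1}$) is the standard construction of \cite{brassard2002quantum}, with the phase-matching variant you mention being the route of \cite{hoyer2000arbitrary,long2001grover}. You are also right that access to a reflection about the good subspace has to be assumed; in the paper's application that subspace is marked by register~4.

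The appeal to the BBBV bound is unnecessary. The $\Theta$ in the statement only describes the cost of this algorithm, and $\pi/(2\theta)\le 2k^\ast+1<\pi/(2\theta)+2$ together with $\sqrt a\le\theta\le\frac{\pi}{2}\sqrt a$ already gives $2k^\ast+1=\Theta(1/\sqrt a)$. If you did want optimality over all algorithms, the search oracle would have to sit inside $U$, because the statement counts applications of $U,U^\dagger$ rather than of the reflection.
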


\section{Results for Quantum Query Access Model}

In this section, we study testing similar means with the quantum query access. We propose a quantum algorithm, establish its correctness, and show that it achieves a query complexity of $\tilde{O}\rbra{\frac{1}{\epsilon}}$. Furthermore, we show that this complexity is optimal up to logarithmic factors by proving a corresponding lower bound.
\subsection{Upper Bound}
\textbf{Limitation of the direct approach.}
A natural strategy for testing whether $\mathcal{D}$ is $\epsilon$-far from having similar means is to sample a subset of the distributions and estimate their means directly. By Lemma~\ref{lem: property of far from}, if $\mathcal{D}$ is $\epsilon$-far from having $\gamma$-similar means, then there exists an interval $[x,y]$ of length $\gamma n$ that separates the collection: $\Omega(\epsilon m)$ distributions have means below $x-\Omega(\epsilon n)$, and another $\Omega(\epsilon m)$ have means above $y+\Omega(\epsilon n)$. In particular, it suffices to inspect only $O(1/\epsilon)$ distributions. Applying Lemma~\ref{lem: mean estimation} to each of them yields a total query complexity of $O(1/\epsilon^2)$.

\textbf{Why the direct approach fails.} 
The above analysis shows that quantum mean estimation alone is not sufficient for testing similar means. 
The underlying reason is that similarity in means does not imply similarity under $\ell_1$ distance: in general, $|\mu(D)-x|$ and $d_1(D,x)$ are not equivalent. 
A distribution may have its mean arbitrarily close to $x$, yet be $\ell_1$-far from a distribution with mean exactly $x$.
Consequently, the approach that uses the mean difference $|\mu(D)-x|$ as a surrogate for $d_1(D,x)$ is not sufficient for the bucketing argument.

\textbf{The quantum feasible mean estimation under $\ell_1$-bounded perturbations ($\ell_1$QFME).} 
To address this, our algorithm avoids estimating $\mu(D)$ directly. Instead, following~\cite{levi2014testing}, we design Algorithm~\ref{alg: QuantumGetBounds} to estimate the extremal feasible means $\mu_{\beta}^{>}(D)$ and $\mu_{\beta}^{<}(D)$ from Definition~\ref{def: extremal mean}. More precisely, given the quantum query access to a distribution $D_i$ and a distance parameter $\beta$, the subroutine estimates the means of two distributions $D_i^{>}$ and $D_i^{<}$ satisfying $d(D_i^{>},D_i)=d(D_i^{<},D_i)=\beta$, while also satisfying
\begin{align*}
\begin{split}
    \mu(D_i)\le \tilde{\mu}(D_i^{>})\le \mu^{>}_{\beta}(D_i),\\
    \mu^{<}_{\beta}(D_i)\le \tilde{\mu}(D_i^{<})\le \mu(D_i),
 \end{split}
\end{align*}
where $|\tilde{\mu}(D_i^{>})-\mu(D_i^{>})|\leq O\rbra{\beta n}$, $|\tilde{\mu}(D_i^{<})-\mu(D_i^{<})|\leq O\rbra{\beta n}$. 

This quantum procedure controls the deviation of a distribution's mean under $\ell_1$ perturbations, providing the structural bounds needed for our quantum testing procedure. We proceed to describe Algorithm~\ref{alg: QuantumGetBounds}, followed by proofs of its correctness.
\begin{algorithm}[H]
\caption{$\ell_1$QFME$\rbra{i, \beta,\delta}$}
\label{alg: QuantumGetBounds}
\begin{algorithmic}[1]
	\STATE {\bfseries Input:} Quantum query access $O_A$ to a collection $\mathcal{D}=(D_1,\ldots,D_m)$, the index $i$, parameters $\beta,\delta\in\rbra{0,1}$.
	\STATE {\bfseries Output:} $e$ and $f$.
    \STATE $d\gets 100\lceil\log(12/\delta)/\beta\rceil$, $t\gets\left\lceil \frac{6}{\beta} \right\rceil$.
	\STATE Prepare $d$ copies of $\ket{\psi_0}=O_A \ket{i}_1 \ket{\vec{0}}_{2,3}$.
    \STATE Measure register 2 of each copy, sort outcomes: $j_1\le\cdots\le j_d$.
    \STATE $a \gets j_{\frac{\beta d}{4}}$, $b \gets j_{(1-\frac{\beta}{4}) d}$, $a'\gets \frac{a+n}{2}$, $b' \gets \frac{b}{2}$.
    \STATE $P_a=I \otimes \ket{a}\bra{a}_2$, $P_b=I \otimes \ket{b}\bra{b}_2$.
    \STATE $\tilde{p}(a)\gets \text{QAE}\rbra{\ket{\psi_0},P_a,t,\frac{\delta}{6}}$, \\$\tilde{p}(b)\gets \text{QAE}\rbra{\ket{\psi_0},P_b,t,\frac{\delta}{6}}$ by Lemma~\ref{lem: amplitude estimation}.
    \STATE If $\tilde{p}(a) \le \beta/4$, $a''\gets a'$. Otherwise $a''\gets \frac{\beta}{4\tilde{p}(a)} a' + \left(1 - \frac{\beta}{4\tilde{p}(a)}\right) a$. 
    \STATE If $\tilde{p}(b) \le \beta/4$, $b''\gets b'$. Otherwise, $b''\gets \frac{\beta}{4\tilde{p}(b)} b' + \left(1 - \frac{\beta}{4\tilde{p}(b)}\right) b$.
    \STATE Let X be a variable such that $X \sim D_i$. Construct $O_\Omega$, $O_Y$, $O_Z$ which are the quantum codes of the following two variables $Y$ and $Z$ based on $X$ with $O_A$:
    \begin{align*}
        Y:=
            \begin{cases}
                a',
                &  X<a, \\[0.3em]
               a''
                &  X = a, \\[0.3em]
                X
                &  X>a. \\[0.3em]
                \end{cases}
        Z:=
            \begin{cases}
                X &  X<b, \\[0.3em]
               b''
                & X = b, \\[0.3em]
                b'
                &  X>b.\\[0.3em]
                \end{cases}
    \end{align*}
    \STATE $e\gets \text{QME}\rbra{O_\Omega, O_Y, \beta, \delta/6}$, \\$f\gets \text{QME}\rbra{O_\Omega, O_Z, \beta, \delta/6}$ by Lemma~\ref{lem: mean estimation}.
    \STATE {\bfseries Return} $e$ and $f$.
\end{algorithmic}
\end{algorithm}

\begin{theorem}\label{thm:l1qeme}
    For $i\in[m]$ and $\beta,\delta\in(0,1)$ and given the quantum query access, Algorithm~\ref{alg: QuantumGetBounds} solves the $\ell_1$QFME problem for the distribution $D_i$ with failure probability at most $\delta$.
Specifically, it outputs values $e$ and $f$ such that
\begin{align}
\begin{split}
    \mu(D_i)\le e\le \mu^{>}_{\beta}(D_i),\\
    \mu^{<}_{\beta}(D_i)\le f\le \mu(D_i),
\end{split}
\end{align}
using $O\rbra{\frac{\log\frac{1}{\delta}}{\beta}}$ queries to $O_A$ and $O_A^\dagger$.
\end{theorem}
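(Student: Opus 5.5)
The plan is to show that the variable $Y$ (resp.\ $Z$) is the outcome of a distribution $D_i^{>}$ (resp.\ $D_i^{<}$) that is a legitimate $\ell_1$-perturbation of $D_i$ of size at most $\beta$, and that its mean exceeds $\mu(D_i)$ by enough to absorb the estimation error. The proof splits into three parts: a query count, a good event of probability at least $1-\delta$, and a deterministic analysis on that event. We treat $e$ in detail; $f$ follows by the symmetric argument with $b$, $b'=b/2$ and the lower tail.

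\textbf{Query count.} Step 4 uses $d=O(\log(1/\delta)/\beta)$ applications of $O_A$. The two QAE calls with $t=\lceil 6/\beta\rceil$ cost $O(\log(1/\delta)/\beta)$ by Lemma~\ref{lem: amplitude estimation}. The oracles $O_\Omega,O_Y,O_Z$ are built from one call to $O_A$ plus classical reversible arithmetic on register 2, so the two QME calls cost $O(\log(1/\delta)/\beta)$ by Lemma~\ref{lem: mean estimation}. Summing gives the claimed bound.

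\textbf{Good event.} We intersect the following events, each failing with probability at most $\delta/6$ and combined by a union bound:
\begin{enumerate}
\item By a Chernoff bound over the $d$ samples, the empirical quantile $a=j_{\beta d/4}$ satisfies $\Pr_{D_i}[X<a]\le \beta/4+\beta/16$ and $\Pr_{D_i}[X\le a]\ge \beta/8$. The constant $100$ in $d$ is what makes this work.
\item The QAE estimate satisfies $|\tilde p(a)-p(a)|\le \sqrt{p(a)}\,\beta/6+\beta^2/36$.
\item The same two statements hold for $b$.
\item The QME outputs satisfy $|e-\mathbb{E}[Y]|\le \beta\sigma_Y$ and $|f-\mathbb{E}[Z]|\le \beta\sigma_Z$.
\end{enumerate}

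\textbf{Deterministic analysis.} Let $D_i^{>}$ be the law of $Y$. It moves all mass on $\{X<a\}$ to the point $a'=(a+n)/2$ and moves a fraction $\beta/(4\tilde p(a))$ of the mass at $a$ to $a'$. Writing $\mathrm{err}$ for the deviation of the moved mass at $a$ from its target $\beta/4$, the total moved mass is $m_Y=\Pr[X<a]+\min(p(a),\beta/4\cdot p(a)/\tilde p(a))$, which on the good event is at most $\beta/4+\beta/16+\beta/4+\mathrm{err}$. Hence $\|D_i^{>}-D_i\|_1\le 2m_Y\le \beta$ after tuning constants, so $\mathbb{E}[Y]\le\mu^{>}_\beta(D_i)$ by Lemma~\ref{lem: distance sum}.

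For the lower bound, every moved unit of mass travels upward by at least $(n-a)/2$, so $\mathbb{E}[Y]-\mu(D_i)\ge m_Y(n-a)/2$ with $m_Y\gtrsim\beta/8$. Since $Y$ takes values in $[a,n]$, we have $\sigma_Y\le (n-a)/2$.

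\textbf{Main obstacle.} This last step is the hard one: showing that $\beta\sigma_Y$ is dominated by the slack $\mathbb{E}[Y]-\mu(D_i)$ on one side and by $\mu^{>}_\beta-\mathbb{E}[Y]$ on the other. The bounds above give slack of order $\beta(n-a)/8$ against an error of $\beta(n-a)/2$, so the constants do not close as written. My first attempt would be to sharpen the variance bound. The mass of $Y$ sitting at $a'$ is only $O(\beta)$, and the rest lies in $[a,n]$, so I would bound $\sigma_Y$ by the spread of the unmoved part of $D_i$. If that still fails, I would run QME with accuracy $c\beta$ for a small constant $c$, which only changes constants in the query count. For the upper side, I would compare $\mathbb{E}[Y]$ with the perturbation that moves mass $\beta/2$ all the way to $n$, whose mean is at least $\mathbb{E}[Y]+\Omega(\beta(n-a))$. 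Together these would give $\mu(D_i)\le e\le\mu^{>}_\beta(D_i)$.
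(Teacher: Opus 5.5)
Your outline follows the paper's proof step for step. You use the Chernoff bracket on the empirical $\beta/4$-quantile, QAE on the atom at $a$ to decide how much of it to shift toward $a'=(a+n)/2$, a two-sided slack proportional to $(n-a)$ times the shifted mass, and QME on $Y$. The obstacle you isolate is genuine, and it is exactly where the paper's own argument is incomplete. The paper asserts $\abs{e-\mu(Y)}\le n\beta/20$ from QME run at accuracy $\beta$. But Lemma~\ref{lem: mean estimation} only gives $\beta\sigma_Y\le\beta(n-a)/2$, and even the asserted bound is not dominated by the slack $\frac{n-a}{2}h(a)\ge\frac{(n-a)\beta}{40}$ derived there. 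The paper's upper-side bound $\mu(Y)\le\mu^{>}_\beta(D_i)-\frac{n-a}{2}\Pr_{D_i}[j\le a]$ also ignores that shifting mass $M$ costs $2M$ in $\ell_1$; your $2m_Y$ accounting is the correct one. So with QME accuracy $\beta$, as Algorithm~\ref{alg: QuantumGetBounds} is written, the guarantee is too weak and neither argument proves the statement. A constant-factor reduction of the QME accuracy is needed.

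\textbf{Your first remedy cannot work.} Take $D_i=\tfrac12\delta_0+\tfrac12\delta_n$.
\begin{itemize}
\item Here $a=0$, $\tilde p(a)\approx\tfrac12$ and $a''\approx\beta n/4$, so $Y$ is $\approx\beta n/4$ or $n$ with probability $\tfrac12$ each, and $\sigma_Y\approx n/2=(n-a)/2$.
\item Meanwhile $e$ must land in $[\tfrac n2,\tfrac n2+\tfrac{\beta n}{2}]$, around $\mathbb{E}[Y]\approx\tfrac n2+\tfrac{\beta n}{8}$. An error of order $\beta\sigma_Y$ is therefore not absorbed, and no sharper variance bound exists.
\item The same example shows that ``$D_i^{>}$ is the law of $Y$'' is false. $Y$ sends the whole atom $p(a)$ to the single point $a''$, so here its law is at $\ell_1$ distance $1$ from $D_i$.
\end{itemize}
What you need is the split distribution: a fraction $\beta/(4\tilde p(a))$ of the atom moved to $a'$, the rest left at $a$. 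It has the same mean as $Y$. Use it for the mean comparisons, and use the actual $Y\in[a,n]$ only for $\sigma_Y\le(n-a)/2$.

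\textbf{Your second remedy does close the proof.}
\begin{itemize}
\item The lower slack is $\frac{n-a}{2}m_Y$.
\item For the upper side, move the lowest $\min(m_Y,\beta/2)$ of mass to $n$ instead. This gives $\mu^{>}_\beta(D_i)-\mathbb{E}[Y]\ge\frac{n-a}{2}\min(m_Y,\beta-m_Y)$, using $\Pr[X<a]\le\beta/3\le\beta/2$. So you do not need $2m_Y\le\beta$, which can actually fail with the algorithm's constants.
\item From $Q_{D_i}(\beta/8)\le a\le Q_{D_i}(\beta/3)$ and $\abs{\tilde p(a)-p(a)}\le\beta/4$ one gets $\beta/20\le m_Y\le 5\beta/6$.
\item Running QME at accuracy $\beta/20$, still $O(\log(1/\delta)/\beta)$ queries, keeps the error below $\frac{\beta}{20}\cdot\frac{n-a}{2}$, inside both slacks.
\end{itemize}
State this as an explicit change to the QME calls rather than a fallback.
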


\begin{proof}
    We prove that $e$ satisfies $\mu(D_i) \le e \le \mu_\beta^{>}(D_i)$, and the proof for $f$ follows analogously.
    
\textbf{Correctness.} For any $p\in\sbra{0,1}$, the quantile of $D_i$ is an index $Q_{D_i}\rbra{p}\in\sbra{n}$ satisfying $\Pr_{D_i}[j < Q_{D_i}\rbra{p}] < p$ and $\Pr_{D_i}[j \leq Q_{D_i}\rbra{p}] \geq p$. Let $X_p$ denote the number of outcomes in which register~2 is measured at Step~4 such that $j \le Q_{D_i}\rbra{p}$, and let $Y_p$ denote the number of measurements of register 2 at Step 4 such that $j<Q_{D_i}\rbra{p}$. Therefore, 
\begin{align*}
\begin{split}
    \mathbb{E}\sbra{X_p}=d\Pr_D[j \leq Q_{D_i}\rbra{p}]\geq dp,\\
    \mathbb{E}\sbra{Y_p}=d\Pr_D[j < Q_{D_i}\rbra{p}]<dp.
\end{split}
\end{align*}
The multiplicative Chernoff bound shows that
\begin{align*}
\begin{split}
    &\Pr[X_p\leq \rbra{1-\alpha}\mathbb{E}\sbra{X_p}]\leq \rbra{\frac{e^{-\alpha}}{\rbra{1-\alpha}^{1-\alpha}}}^{\mathbb{E}\sbra{X_p}},\\
     &\Pr[Y_p\geq \rbra{1+\alpha}\mathbb{E}\sbra{Y_p}]\leq e^{-\frac{\alpha^2\mathbb{E}\sbra{Y_p}}{2+\alpha}}.
\end{split}
\end{align*}

Take $p=\beta/3$ and $\alpha=\frac{1}{4}$. Applying the multiplicative Chernoff bound, we have $\Pr\left[ X_{\beta/3}<\frac{d\beta}{4}\right]<e^{-\frac{d\beta}{100}}$. Since $d = 100\lceil\log(12/\delta)/\beta\rceil=\Theta\rbra{\frac{\log\frac{1}{\delta}}{\beta}}$, the above probability is at most $\delta/12$.

Next, take $p = \beta/8$ and any $\alpha\geq 1$. Applying the multiplicative Chernoff bound again, we obtain $\Pr\left[Y_{\beta/8}\geq \frac{d\beta}{4}\right]\leq e^{-\frac{\alpha\mathbb{E}\sbra{Y_{\beta/8}}}{3}}\leq e^{-\frac{d\beta}{24}}$, where the last inequality follows from the fact that $\alpha\mathbb{E}\sbra{Y_{\beta/8}}=\frac{d\beta}{4}-\mathbb{E}\sbra{Y_{\beta/8}}\geq \frac{d\beta}{8}$. Thus the above probability is at most $\delta/12$. By a union bound, with probability at least $1 - \delta/6$, neither of the above events occurs. 

Conditioned on this event, the outcomes contain at least $\frac{d\beta}{4}$ points with
$j \le Q_{D_i}\rbra{\frac{\beta}{3}}$ and fewer than $\frac{d\beta}{4}$ points with $j<Q_{D_i}\rbra{\frac{\beta}{8}}$. By the choice of $a$ in Step~5 of the algorithm, it follows that $Q_{D_i}\rbra{\frac{\beta}{8}} \le a \le Q_{D_i}\rbra{\frac{\beta}{3}}$.
Therefore,
\begin{align*}
\begin{split}
    &\Pr_{D_i}[j < a] \le \Pr_{D_i}\sbra{j < Q_{D_i}\rbra{\frac{\beta}{3}}} \le \frac{\beta}{3},\\
    &\Pr_{D_i}[j \le a] \ge \Pr_{D_i}\sbra{j \le Q_{D_i}\rbra{\frac{\beta}{8}}} \ge \frac{\beta}{8}.
\end{split}
\end{align*}

With Lemma~\ref{lem: amplitude estimation}, we have $\abs{p\rbra{a}-\tilde{p}\rbra{a}}\leq \frac{\beta}{6}+\frac{\beta^2}{36}\leq \frac{\beta}{4}$
with probability at least $1 - \delta/6$, where $p\rbra{a}=\Abs{P_aO_A\ket{i}_1\ket{\vec{0}}_{2,3}}^2$.

If $\tilde{p}\rbra{a}\leq \frac{\beta}{4}$, then $\Pr[j\leq a]\leq \beta$. Then 
\begin{align*}
\begin{split}
      \mu\rbra{Y}&=a'\Pr_{D_i}[j\leq a]+\sum_{j>a}j\times p\rbra{j}\\
      &\geq \mu\rbra{D_i}+\frac{\rbra{n-a}\beta}{16}.
\end{split}
\end{align*}
With Lemma~\ref{lem: distance sum}, we obtain
\begin{align*}
\begin{split}
    \mu\rbra{Y}&\leq \mu_\beta^>\rbra{D_i}-\frac{n-a}{2}\Pr_{D_i}[j\leq a]\\
    &\leq \mu_\beta^>\rbra{D_i}-\frac{\rbra{n-a}\beta}{16}.
\end{split}
\end{align*}

If $\tilde{p}\rbra{a}>\frac{\beta}{4}$, then $h(a):=\Pr_{D_i}[j<a]+\frac{\beta}{4\tilde{p}\rbra{a}}\leq \beta$. Then
\begin{align*}
\begin{split}
    \mu\rbra{Y}=&a'\Pr_{D_i}[j<a]+a''p\rbra{a}+\sum_{j>a}j\times p\rbra{j}\\
    \geq &\mu\rbra{D_i}+\frac{n-a}{2}h(a), \\
    \mu\rbra{Y}<&\mu_\beta^>\rbra{D_i}-\frac{n-a}{2}h(a).
\end{split}
\end{align*}
One can verify that $h(a)\geq \frac{\beta}{20}$. By Lemma~\ref{lem: mean estimation}, we have $\abs{e-\mu\rbra{Y}}\leq \frac{n\beta}{20}$, with probability at least $1 - \delta/6$. Therefore, $\mu\rbra{D_i}\leq e\leq \mu_\beta^{>}(D_i)$.

The total failure probability of $e$ is at most $\frac{\delta}{2}$, so the failure probability of Algorithm~\ref{alg: QuantumGetBounds} is $\delta$.

\textbf{Complexity.} The complexity of Step~4 is $O\rbra{\frac{\log\frac{1}{\delta}}{\beta}}$, and the complexity of Step~7 is $O\rbra{\frac{\log\frac{1}{\delta}}{\beta}}$. $O_\Omega$ can be constructed with  $O\rbra{1}$ uses of  $O_A$, $O_Y$ and $O_Z$ can be constructed with some controlled quantum gates. Therefore, the complexity of Step~10 is $O\rbra{\frac{\log\frac{1}{\delta}}{\beta}}$. The total number of queries to $O_A$ is $O\rbra{\frac{\log\frac{1}{\delta}}{\beta}}$.
\end{proof}

\textbf{Description of the main algorithm (Algorithm~\ref{alg: test-similar-means})}.    
 Combining Lemma~\ref{lem: property of far from} with Algorithm~\ref{alg: QuantumGetBounds}, we obtain our main procedure Algorithm~\ref{alg: test-similar-means}. By Lemma~\ref{lem: property of far from}, if the collection $\mathcal{D}$ is $\epsilon$-far from having $\gamma$-similar means, then there exists an interval $[x,y]$ of length $\gamma n$ such that there are distributions $D_1$ and $D_2$
satisfying
\begin{align}
    d(D_1,x) \ge \beta_1,\quad
    d(D_2,y) \ge \beta_2,
\end{align}
for suitable parameters $\beta_1,\beta_2=\Theta(\epsilon)$. Applying Algorithm~\ref{alg: QuantumGetBounds} to $D_1$ and $D_2$ produces estimates
$\tilde{\mu}(D_1^{>})$ and $\tilde{\mu}(D_2^{<})$ that approximate the corresponding extremal means under $\ell_1$ perturbations.  The resulting gap between $\tilde{\mu}(D_1^{>})$ and $\tilde{\mu}(D_2^{<})$ exceeds $\gamma n$, thereby serving as a witness that the collection does not satisfy the
$\gamma$-similar means property. The following theorem formalizes the performance guarantees of Algorithm~\ref{alg: test-similar-means}, establishing both its correctness and its query complexity under the quantum query access model.

\begin{theorem}\label{thm: upper bound of Oracle A}
For $\gamma,\epsilon\in\rbra{0,1}$ and a collection of $m$ distributions, there exists a quantum algorithm that can test $\gamma$-similar means in the quantum query access model, with probability at least $\frac{2}{3}$ using $\tilde{O}\rbra{\frac{1}{\epsilon}}$ queries.
\end{theorem}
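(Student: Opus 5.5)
We follow the classical tester of \cite{levi2014testing}, replacing its sampling-based subroutine with Algorithm~\ref{alg: QuantumGetBounds} and using a bucketing argument over the distance scales $\beta$. The tester is as follows. For each $k=1,\ldots,K$ with $K=\lceil\log_2(8/\epsilon)\rceil$, set $\beta_k=2^{-k}$. Sample $s_k=\Theta(\log(1/\epsilon)/(\epsilon 2^{k}))$ indices $i$ uniformly from $[m]$. On each sampled index, run $\ell_1$QFME$(i,\beta_k,\delta')$ with $\delta'=\Theta(\epsilon/\log^2(1/\epsilon))$. Record the outputs $e_i^{(k)}$ and $f_i^{(k)}$. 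Finally, reject iff
\begin{align*}
\max_{k,i} f_i^{(k)}-\min_{k,i} e_i^{(k)}>\gamma n,
\end{align*}
meaning some lower estimate exceeds some upper estimate by more than $\gamma n$.

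\textbf{Query count.} By Theorem~\ref{thm:l1qeme}, level $k$ costs $s_k\cdot O(\log(1/\delta')/\beta_k)=\tilde O(1/\epsilon)$ queries. There are $O(\log(1/\epsilon))$ levels, so the total is $\tilde O(1/\epsilon)$. The total number of subroutine calls is $\sum_k s_k=\tilde O(1/\epsilon)$, so a union bound with the chosen $\delta'$ keeps the overall subroutine failure probability below $1/6$.

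\textbf{Completeness.} Suppose every $\mu(D_i)$ lies in an interval $I$ of length $\gamma n$. Theorem~\ref{thm:l1qeme} gives $e_i\ge \mu(D_i)\ge \min I$ and $f_i\le\mu(D_i)\le\max I$. Hence $f_{i'}-e_i\le\gamma n$ for every pair of calls, and we accept. The orientation matters here: whichever output bounds the mean from above must play the role of the upper estimate. If, as the notation $\mu^{>}$ suggests, $e\ge\mu$ is the upper-feasible value, then rejection should compare a large $f$ against a small $e$, as written above.

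\textbf{Soundness.} By Lemma~\ref{lem: property of far from}, there is $[x,y]$ with $y-x=\gamma n$ and $\sum_{i:\mu_i>y}d_1(D_i,y)>\epsilon m/4$, and symmetrically for $x$. Bucket the indices with $\mu_i>y$ by the dyadic value of $d_1(D_i,y)\in(2^{-k},2^{-k+1}]$. Buckets with $d_1<\epsilon/8$ contribute at most $\epsilon m/8$ in total. Pigeonholing over the $K$ levels, some level $k$ has at least $\Omega(\epsilon m 2^{k}/K)$ indices in its bucket. Thus $s_k$ uniform samples hit such an index with probability $\ge 5/6$. For such an index, $d_1(D_i,y)>\beta_k$. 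Since $d_1(D_i,\cdot)$ is convex and increasing beyond the mean (as in the proof of Lemma~\ref{lem: distance sum}), $\mu^{<}_{\beta_k}(D_i)>y$. Theorem~\ref{thm:l1qeme} then gives $f\ge\mu^{<}_{\beta_k}(D_i)>y$. Symmetrically, some sampled index at some level yields $e<x$. So $\max f-\min e>y-x=\gamma n$, and we reject. A union bound over the two sampling events and the subroutine failures gives success probability $\ge 2/3$.

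\textbf{Main obstacle.} The hardest step is the bucketing. It must be checked that one sample size per level suffices simultaneously for the upper and lower sides, and that the logarithmic losses from the $K$ levels and from $\delta'$ stay polylogarithmic. A second point to verify is that the strict inequality $d_1(D_i,y)>\beta$ really forces $\mu^{<}_\beta(D_i)>y$, including the boundary case where $d_1(D_i,0)<\beta$. That case is harmless because $d_1(D_i,y)\le d_1(D_i,0)$ for $0\le y<\mu_i$ by monotonicity, so $d_1(D_i,y)>\beta$ rules it out.
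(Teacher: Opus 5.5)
Your proposal is correct and is essentially the paper's proof. It uses the same dyadic-level tester: your $\beta_k=2^{-k}$ with $s_k=\Theta(\beta_k\log(1/\epsilon)/\epsilon)$ sampled indices is the paper's $\beta_{q-1}=2^{q-1}\epsilon/8$ with $t(q)=\Theta(2^q\log(1/\epsilon))$, reindexed. It also uses the same rule of rejecting when $\max f-\min e>\gamma n$, and the same bucketing of $d_1(D_i,x)$ and $d_1(D_i,y)$ via Lemma~\ref{lem: property of far from} combined with Theorem~\ref{thm:l1qeme}.

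Your write-up is, if anything, more careful than the paper's (strict inequalities, the boundary case of $\mu^{<}_{\beta}$), but two constants need fixing. First, three failure events of $1/6$ each only give success probability $\ge 1/2$, so shrink each to at most $1/9$. Second, $d_1$ can lie in $(1,2]$, which your buckets miss; fold those indices into bucket $k=1$, where the test with $\beta_1=1/2$ still detects them.
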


\begin{proof}
    Since each call to Algorithm~\ref{alg: QuantumGetBounds} is made with failure probability at most $1/(3rt(q))$, the total failure probability is at most $1/3$. Consequently, throughout the remainder of the proof we assume that Algorithm~\ref{alg: QuantumGetBounds} is correct.

Assume that the collection $\mathcal{D}$ has $\gamma$-similar means. Theorem~
\ref{thm:l1qeme} implies $f_{q,h} \le \mu(D_{i^h_{q}}) \le e_{q,h}$. Hence, $\hat{y} - \hat{x}
\le
\max_i \mu(D_i) - \min_i \mu(D_i)
\le \gamma n$. Therefore, Algorithm~\ref{alg: test-similar-means} accepts.

Assume that the collection $\mathcal{D}$ is $\varepsilon$-far from having
$\gamma$-similar means.
By Lemma~\ref{lem: property of far from}, there exists an interval $[x,y]$ of length $\gamma n$
such that
\begin{align*}
    \sum_{i:\mu(D_i)<x} d_1(D_i,x)+\sum_{i:\mu(D_i)>y} d_1(D_i,y)=\frac{\epsilon m}{2}.
\end{align*}
We group the distributions satisfying $\mu(D_i)<x$ according to the value of $d_1(D_i,x)$, and define
\begin{align*}
    L_q:=\bigl\{ i : \mu(D_i)<x, d_1(D_i,x)\in\left(\beta_{q-1},\beta_q\right] \bigr\},
\end{align*}
where $\beta_q = (\varepsilon/8)2^q$. Analogously, for distributions with $\mu(D_i)>y$, we define the sets $R_q$.
It is easily proven that $\sum_{i:\mu\rbra{D_i}<x,d_1\rbra{D_i,x}\geq \frac{\epsilon}{8}}d_1\rbra{D_i,x}\geq \frac{\epsilon m}{8}$. By a counting argument, there exists an index $q_L$ such that $\sum_{i\in L_{q_L}}d_1\rbra{D_i,x}\geq \frac{\epsilon m}{8r}$, thus $|L_{q_L}|\geq \frac{\epsilon m}{8r\beta_q}=\frac{m}{r2^{q_L}}$.

In the same way, there exists an index $q_R$ such that $|R_{q_R}|\geq \frac{m}{r2^{q_R}}$. During iteration $q_L$, the algorithm selects $t(q_L)=\Theta(2^{q_L}\log(1/\varepsilon))$ indices from the collection, which suffices to guarantee that an index $i\in L_{q_L}$ is selected with high constant probability. An analogous argument shows that, in iteration $q_R$, an index $i'\in R_{q_R}$ is selected with high constant probability.

Theorem~\ref{thm:l1qeme} implies that for any $i\in L_{q_L}$, $e_{q_L,i} \le \mu^{>}_{\beta_{q_L}}(D_i) \le x$ and for any $i'\in R_{q_R}$, $f_{q_R,i'} \ge \mu^{<}_{\beta_{q_R}}(D_{i'}) \ge y$. So we have $\hat{x} \le x$ and $\hat{y} \ge y$, which implies $\hat{y}-\hat{x} \ge y-x = \gamma n.$ Hence, the algorithm rejects. 

The total complexity satisfies that
\begin{align}
    O\rbra{\sum_{q=1}^r t\rbra{q}\frac{\log\rbra{rt\rbra{q}}}{2^q\epsilon}}=\tilde{O}\rbra{\frac{1}{\epsilon}}.
\end{align}
\end{proof}

\subsection{Lower Bound}

As shown in Theorem~\ref{thm: upper bound of Oracle A}, similar means testing admits a quantum query complexity of $\tilde{O}\!\left(\frac{1}{\epsilon}\right)$. We next show that this dependence on $\epsilon$ is optimal up to logarithmic factors by reducing from the problem of distinguishing probability distributions. The corresponding lower bound, established in~\cite{belovs2019quantum}, is stated below.

\begin{algorithm}[H]
\caption{Quantum testing similar means QTSM$\rbra{O_A,\gamma,\epsilon}$}
\label{alg: test-similar-means}
\begin{algorithmic}[1]
    \STATE {\bfseries Input:} Quantum Oracle $O_A$ to a collection $\mathcal{D}=(D_1,\ldots,D_m)$, parameters $\gamma,\epsilon\in\rbra{0,1}$.
	\STATE {\bfseries Output:} Accept or reject.
    \STATE $r\gets \lceil \log(8/\varepsilon) \rceil$
	\FOR{$q = 1$ to $r$}
    \STATE Select $I_q = \{ i^1_{q}, \ldots, i^{t(q)}_{q} \} \subseteq [m]$
where $|I_q| = t(q) = \Theta(2^q \log(1/\varepsilon))$.
    \FOR{$h=1$ to $t_q$}
        \STATE $e_{q,h},f_{q,h}\gets l_1\text{QFME}\left(i^h_{q}, \frac{2^{q-1}\epsilon}{8}, \frac{1}{3rt(q)}\right)$.
    \ENDFOR
\ENDFOR
\STATE $\hat{x}\gets \min_{q,h} e_{q,h}$ and $\hat{y} = \max_{q,h} f_{q,h}$.
\IF{$\hat{y} - \hat{x} > \gamma n$}
\STATE \textbf{Reject}
\ELSE
\STATE \textbf{Accept}
\ENDIF 
\end{algorithmic}
\end{algorithm}

\begin{lemma}[Theorem 4 of \cite{belovs2019quantum}]\label{lem: distinguishing distributions}
   Let $p$ and $q$ be probability distributions over $[n]$.
Let $U_p$ and $U_q$ be purified quantum query access oracles for the distributions $p$ and $q$, respectively, satisfying
\begin{align*}
    U_p \ket{\vec{0}}
    = \sum_{i=1}^{n}\sqrt{p_i}\ket{i}\ket{\psi_i}, \\
    U_q \ket{\vec{0}}
    = \sum_{i=1}^{n}\sqrt{q_i}\ket{i}\ket{\phi_i}.
\end{align*}
Then any quantum algorithm that can distinguish between $p$ and $q$
requires $\Omega\!\left( \frac{1}{d_H(p,q)} \right)$ queries to $U_p$ and $U_q$, where $d_H(p,q)$ denotes the Hellinger distance, defined as $d_H(p,q)
= \sqrt{\frac{1}{2}\sum_{i\in[n]} (\sqrt{p_i}-\sqrt{q_i})^2 }$.
\end{lemma}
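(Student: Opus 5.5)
The plan is a hybrid argument with adversarially chosen oracles. A tester for $p$ versus $q$ must succeed for \emph{every} valid purified oracle of each distribution. So it suffices to exhibit one specific pair $U_p,U_q$ of the stated form whose operator-norm distance is $O(d_H(p,q))$, and then show that no algorithm with $T=o(1/d_H(p,q))$ queries can tell them apart.

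\textbf{Construction of the oracles.} Choose the purifying states to coincide: $\ket{\psi_i}=\ket{\phi_i}=\ket{i}$ in an ancilla register (these are orthonormal, as required). Set
\[
\ket{u}:=\sum_i\sqrt{p_i}\ket{i}\ket{i},\qquad \ket{v}:=\sum_i\sqrt{q_i}\ket{i}\ket{i}.
\]
Their overlap is real and nonnegative:
\[
\langle u|v\rangle=\sum_i\sqrt{p_iq_i}=1-d_H(p,q)^2 .
\]
Fix any unitary $U_p$ with $U_p\ket{\vec 0}=\ket{u}$. Let $R$ be the rotation that acts on the (at most) two-dimensional subspace $\mathrm{span}\{\ket u,\ket v\}$ and sends $\ket u$ to $\ket v$, and acts as the identity on the orthogonal complement. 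Define $U_q:=RU_p$.

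The rotation angle $\theta$ satisfies $\cos\theta=1-d_H^2$. Hence
\[
\|U_p-U_q\|=\|R-I\|=2\sin(\theta/2)=\sqrt{2}\,d_H(p,q).
\]
The same bound holds for the inverses, since $\|U_p^\dagger-U_q^\dagger\|=\|U_p-U_q\|$. It also holds for the controlled versions, because $\|\mathrm{c}U_p-\mathrm{c}U_q\|=\|U_p-U_q\|$.

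\textbf{Hybrid argument.} Consider any algorithm that makes $T$ queries, interleaved with fixed unitaries that do not depend on the oracle. Replace the oracle uses one at a time, switching from $U_p$-type to $U_q$-type. Each replacement changes the final state by at most $\|U_p-U_q\|$ in Euclidean norm, by unitary invariance and the triangle inequality. Therefore
\[
\big\|\ket{\Phi_p}-\ket{\Phi_q}\big\|\le T\sqrt2\, d_H(p,q).
\]
Distinguishing with success probability at least $2/3$ requires the two final states to be at constant trace distance. For pure states this forces $\|\ket{\Phi_p}-\ket{\Phi_q}\|\ge c$ for some absolute constant $c>0$. Combining the two inequalities gives $T\ge c/(\sqrt2\,d_H(p,q))=\Omega(1/d_H(p,q))$.

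\textbf{Main obstacle.} The delicate step is justifying that it is legitimate to fix the purifications adversarially, and doing so in a way that makes the overlap exactly the Bhattacharyya coefficient $1-d_H^2$. If the ancilla states were arbitrary, $\langle u|v\rangle$ could be much smaller and the oracles could be far apart. This is handled by reading the lemma as a worst case over valid oracles, and by using the alignment $\ket{\psi_i}=\ket{\phi_i}$, which maximizes the overlap. One must also check that the rotation $R$ is well defined when $\ket u=\ket v$; in that case take $R=I$ and the bound is trivial. Finally, one should check that the bound $\|R-I\|=2\sin(\theta/2)$ survives when $R$ is extended to controlled and inverse queries; this follows from the norm identities already noted above.
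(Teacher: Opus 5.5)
Your proof is correct. The paper gives no proof of its own: it imports the lemma as Theorem~4 of Belovs, where the lower bound comes from the adversary method for state-generating input oracles. Your route is more elementary.

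Because the purified-access model constrains the oracle only on $\ket{\vec 0}$, the lower-bound prover may fix aligned purifications $\ket{\psi_i}=\ket{\phi_i}=\ket{i}$. These are orthonormal, so they also meet the paper's own access-model definitions, and they give $\langle u|v\rangle=\sum_i\sqrt{p_iq_i}=1-d_H(p,q)^2$. Setting $U_q=RU_p$ then gives $\|U_p-U_q\|=\sqrt{2}\,d_H(p,q)$ for plain, inverse and controlled queries alike. The hybrid bound, together with $D_{\mathrm{tr}}\le\|\ket{\Phi_p}-\ket{\Phi_q}\|$ for pure states and Helstrom, yields the explicit bound $T\ge 1/(3\sqrt{2}\,d_H(p,q))$.

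You are right that the worst-case reading is forced. For a fixed pair of oracles whose ancilla states are mutually orthogonal, one query distinguishes them even when $p=q$. So the lemma as literally phrased only makes sense for algorithms that must work for every valid oracle. That is exactly what the reduction in Theorem~\ref{thm: lower bound of Oracle A} needs.

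The adversary framework, in exchange for its weight, buys robustness in access models where the oracle cannot be chosen freely. An example is query access to an input string whose empirical distribution is $p$ or $q$: there, distinct oracles are at constant operator-norm distance, and a single-pair hybrid gives nothing. It also composes with other adversary bounds. For the purified-oracle statement used in this paper, your hybrid argument is fully sufficient and self-contained.
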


We now establish a tight query lower bound up to logarithmic factors for testing $\gamma$-similar means, obtained through a reduction from the problem of distinguishing two distributions.

\begin{theorem}\label{thm: lower bound of Oracle A}
    Any quantum algorithm for testing $\gamma$-similar means requires at least $\Omega\rbra{\frac{1}{\epsilon}}$ queries to the quantum query access.
\end{theorem}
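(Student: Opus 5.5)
We reduce from distinguishing two distributions (Lemma~\ref{lem: distinguishing distributions}). The plan is to build two distributions $p$ and $q$ on $\{0,\ldots,n\}$ with Hellinger distance $d_H(p,q)=\Theta(\epsilon)$. From them we form collections $\mathcal{D}_p$ and $\mathcal{D}_q$ with two properties: $\mathcal{D}_p$ has $\gamma$-similar means, and $\mathcal{D}_q$ is $\epsilon$-far from it. In addition, one query to $O_A$ for either collection must be implementable with $O(1)$ queries to the corresponding purified oracle $U_p$ or $U_q$. A tester using $T$ queries then distinguishes $p$ from $q$ with $O(T)$ queries, so $T=\Omega(1/d_H(p,q))=\Omega(1/\epsilon)$.

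\textbf{Construction.} Fix a reference distribution $R$ with mean $n/2$; for instance, a point mass at $\lfloor n/2\rfloor$, taking $n$ even. The collections are
\[
\mathcal{D}_p=(D_1,\ldots,D_m)\quad\text{with } D_1=\cdots=D_{m/2}=R \text{ and } D_{m/2+1}=\cdots=D_m=p,
\]
and $\mathcal{D}_q$ is the same with $p$ replaced by $q$. Take $p=R$, so $\mathcal{D}_p$ trivially has $\gamma$-similar means (all means are equal). Take $q=(1-c\epsilon)R+c\epsilon\,\delta_n$ for a suitable constant $c$. Then $\mu(q)=n/2+c\epsilon n/2$.

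\textbf{Oracle simulation.} $O_A$ on $\ket{i}$ is implemented by a controlled operation. If $i\le m/2$ it prepares $R$ with no queries. Otherwise it applies $U_p$ (respectively $U_q$). The orthogonality condition on the junk states $\ket{\varphi_{i,j}}$ is inherited from $U_p$ and $U_q$, and it holds trivially for $R$ if we use a fixed ancilla state. Controlled versions and inverses cost $O(1)$ oracle calls, so each query to $O_A$ costs $O(1)$ queries.

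\textbf{Hellinger distance.} Only the two atoms $n/2$ and $n$ differ between $p$ and $q$. A direct computation gives $\sum_j(\sqrt{p_j}-\sqrt{q_j})^2=(1-\sqrt{1-c\epsilon})^2+c\epsilon$. This is $\Theta(\epsilon)$, so $d_H=\Theta(\sqrt{\epsilon})$, which would only give $\Omega(1/\sqrt\epsilon)$.

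To get $d_H=\Theta(\epsilon)$ we need perturbations that are small multiplicatively on heavy atoms rather than new mass on an empty atom. So instead let $p$ be uniform on $\{0,n\}$, i.e., $(1/2,1/2)$, and $q=(1/2-c\epsilon,\,1/2+c\epsilon)$. Then $d_H=\Theta(\epsilon)$ and $\mu(q)-\mu(p)=2c\epsilon n$. The first half of the collection is set equal to $p$ as well, so $\mathcal{D}_p$ is constant.

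\textbf{Farness of $\mathcal{D}_q$ (main obstacle).} In the $\gamma$-similar-means definition the interval has length $\gamma n$, so the mean gap $2c\epsilon n$ must exceed $\gamma n$ by a margin. Moreover, moving a distribution on $\{0,\ldots,n\}$ to shift its mean by $\Delta$ costs $\ell_1$ distance at least $2\Delta/n$. The fix is to spread the two halves apart by design. Keep the first half fixed at mean $n/2-\gamma n/2$: let it be $p$ shifted so that its mean is exactly this value, using $p'=((1+\gamma)/2,(1-\gamma)/2)$ on $\{0,n\}$. Set the second half to $p''=((1-\gamma)/2,(1+\gamma)/2)$ in the YES case, and to $((1-\gamma)/2-c\epsilon,(1+\gamma)/2+c\epsilon)$ in the NO case; only the second half is queried through the unknown oracle. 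In the YES case the mean spread is exactly $\gamma n$. In the NO case the spread is $(\gamma+2c\epsilon)n$. Any $\mathcal{D}^*$ with $\gamma$-similar means must then shrink the spread by $2c\epsilon n$. Since each unit of $\ell_1$ change moves a mean by at most $n/2$ per unit of distance, the average distance is at least about $2c\epsilon\cdot 2/ m\cdot(m/2)\cdot\frac{1}{n/2}\cdot n/2$, i.e., $\Omega(c\epsilon)$. Choosing $c$ a large enough constant makes the average $\ell_1$ distance exceed $\epsilon$, while the Hellinger distance stays $\Theta(\epsilon)$ (assuming $\gamma$ bounded away from $1$). This averaging bound is the step to verify carefully. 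Lemma~\ref{lem: distance sum} gives exactly the needed fact that distance $\beta$ shifts a mean by at most $\beta n$ (up to a constant), and summing this over the collection yields $\Omega(\epsilon)$ farness.
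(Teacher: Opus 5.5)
Your final construction is correct and is essentially the paper's argument: a reduction from Belovs' $\Omega(1/d_H)$ distinguishing bound, using two-point distributions on $\{0,n\}$ perturbed by $\Theta(\epsilon)$ on heavy atoms, paired with a fixed reference distribution whose mean sits exactly $\gamma n$ away. The paper does this with $m=2$ and $D_3(0)=\tfrac12-\gamma$, whereas you use $m/2$ copies of each and check the farness constant more carefully; one small slip is that the NO-case spread is $(\gamma+c\epsilon)n$, not $(\gamma+2c\epsilon)n$, so the average $\ell_1$ distance is at least $c\epsilon$, which still exceeds $\epsilon$ for, e.g., $c=2$.
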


\begin{proof}
    Let $D_1(0)=D_1(n)=\tfrac12$, 
$D_2(0)=\tfrac12+\epsilon$, $D_2(n)=\tfrac12-\epsilon$, 
and $D_3(0)=\tfrac12-\gamma$, $D_3(n)=\tfrac12+\gamma$.
By definition, the pair $(D_1,D_3)$ has $\gamma$-similar means,
while the pair $(D_2,D_3)$ is at least $2\epsilon$-far from having
$\gamma$-similar means.

Distinguishing $D_1$ from $D_2$ reduces to testing $\gamma$-similar means 
on the pair $(D, D_3)$: when $D = D_1$ the pair has $\gamma$-similar means, 
whereas when $D = D_2$ it is $2\epsilon$-far from having $\gamma$-similar means. Since $d_H(D_1, D_2) \le 2\epsilon$, 
Lemma~\ref{lem: distinguishing distributions} implies that distinguishing 
$D_1$ from $D_2$ requires $\Omega(1/\epsilon)$ queries. Thus the same lower bound carries over to testing $\gamma$-similar means.
\end{proof}

\section{Results for Quantum Sampling Access Model}
In this section, we establish upper and lower bounds on the query complexity under the quantum sample access model. The upper bound is $\tilde{O}\rbra{\frac{\sqrt{m}}{\epsilon}}$, obtained by simulating $O_A$ using $O_B$, which enables a reduction to the query model algorithm. Unlike the query model, where the complexity is independent of the number of distributions $m$, the sampling model exhibits an explicit dependence on $m$. For the lower bound, we show $\Omega\rbra{m^{1/3}+\frac{m^{1/4}}{\epsilon}}$ via a reduction from the problem of uniformity testing.

\subsection{Upper Bound}
We first show how to simulate $O_A$ using $O_B$.

\begin{lemma}\label{lem: construct A by B}
$O_A$ can be deterministically constructed from $O_B$ using \(O(\sqrt{m})\) applications of $O_B$.
\end{lemma}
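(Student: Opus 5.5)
The plan is to realize $O_A$ on input $\ket{i}_1\ket{\vec{0}}_{2,3}$ by running $O_B$ and then amplifying the branch whose first register equals $i$. The amplitude of that branch is known exactly, which is what makes the construction deterministic. Concretely, I would add a fresh "control" register holding $\ket{i}$. The input index is kept in this register, and $O_B$ acts on a copy of registers $1,2,3$ initialized to $\ket{\vec 0}$. For each fixed $i$, set $U_i := O_B$ acting on the working registers, so that
\begin{align*}
U_i\ket{\vec 0}=\frac{1}{\sqrt m}\ket{i}_1\ket{\Phi_i}_{2,3}+\sqrt{1-\frac1m}\,\ket{\mathrm{junk}_i},
\end{align*}
where $\ket{\Phi_i}=\sum_j\sqrt{D_i(j)}\ket{j}\ket{\varphi_{i,j}}$ and $\ket{\mathrm{junk}_i}=\frac{1}{\sqrt{m-1}}\sum_{i'\neq i}\ket{i'}\ket{\Phi_{i'}}$ is orthogonal to the good part. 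The good state is flagged by the projector $\ket{i}\bra{i}_1$, and $a=1/m$ is known and independent of the unknown distributions.

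I then apply Lemma~\ref{lem: deterministic amplitude amplification} with $a=1/m$. This deterministically maps $\ket{\vec 0}$ to $\ket{i}_1\ket{\Phi_i}_{2,3}$ using $\Theta(\sqrt m)$ applications of $O_B$ and $O_B^\dagger$. The reflection about the good subspace is the phase flip conditioned on register $1$ equalling the control value $i$. This is implemented by a comparator between the control register and register $1$ and costs no queries. Because $a$ is the same for every $i$, the iteration count and the rotation angles of the exact amplitude amplification schedule do not depend on $i$. So a single circuit, controlled on the index register, works coherently for all $i$ simultaneously, giving $\ket{i}\ket{\vec0}\mapsto\ket{i}\ket{i}_1\ket{\Phi_i}_{2,3}$.

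The main obstacle is that this output carries an extra copy of $i$ (register $1$ of the working space), whereas $O_A$ as defined has only $\ket{i}_1\ket{\Phi_i}_{2,3}$. My first attempt would be to uncompute the duplicate by XOR-ing the control register into working register $1$, turning it into $\ket{0}$. This leaves a clean ancilla and the state $\ket{i}\ket{\Phi_i}_{2,3}$. The only thing needed is that the map be a unitary extension of the required action on inputs $\ket{i}\ket{\vec0}$, and it is, since it is a product of unitaries. Its inverse and controlled versions are then obtained by inverting or controlling each component, using $O_B^\dagger$ and controlled-$O_B$, with the same $O(\sqrt m)$ count. Finally, I would double-check that the phase-controlled variants of amplitude amplification (Høyer / Long) need only $U_i$, $U_i^\dagger$ and reflections, so that no additional queries are hidden. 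This yields $O(\sqrt m)$ applications of $O_B$ in total.
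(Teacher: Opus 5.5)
Your proposal is correct and takes the same route as the paper: exact amplitude amplification with the known amplitude $1/\sqrt{m}$, applied to $O_B$, where the good branch is the one with register 1 equal to $i$. The paper marks that branch by a $\text{C}_i\text{NOT}$ onto a flag qubit rather than by your comparator-controlled phase flip. Your version is somewhat more careful than the paper's, which fixes $i$ and only prepares $\ket{i}\ket{\Psi_i}\ket{1}$ from $\ket{\vec 0}$; you note that the schedule does not depend on $i$, so the construction is coherent over $i$, and you clean up the duplicate index register.
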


\begin{proof}
The key idea is to use \emph{Quantum Deterministic Amplitude Amplification} (Lemma~\ref{lem: deterministic amplitude amplification}) to transform $O_B$ into $O_A$.
      Fix an $i$. Let $\text{C}_i\text{NOT}$ be a unitary such that $\text{C}_i\text{NOT}=\ket{i}\bra{i}_1\otimes X_4+\rbra{I-\ket{i}\bra{i}}_1\otimes I_4$. Let $U=\text{C}_i\text{NOT}\cdot O_B$. Then
    \begin{align*}
    \begin{split}
        U\ket{\vec{0}}_{1,2,3,4}&=\frac{1}{\sqrt{m}}\ket{i}_1\ket{\Psi_i}_{2,3}\ket{1}_4\\&+\sqrt{\frac{m-1}{m}}\ket{\Phi}_{1,2,3}\ket{0}_4,
    \end{split}
    \end{align*}
    where 
    \begin{align*}
    \begin{split}
        \ket{\Psi_i}&=\sum_{j=0}^n\sqrt{D_i\rbra{j}}\ket{j}_2\ket{\varphi_{i,j}}_3,\\
        \ket{\Phi}&=\sum_{g\in\sbra{m},g\neq i}\frac{1}{\sqrt{m-1}}\ket{g}_1\\
    &\otimes\rbra{\sum_{j=0}^n\sqrt{D_g\rbra{j}}\ket{j}_2\ket{\varphi_{g,j}}_3}.
    \end{split}
    \end{align*}
    Since $m$ is known, with Lemma~\ref{lem: deterministic amplitude amplification}, we can deterministically prepare $\ket{i}\ket{\Psi_i}\ket{1}$ with $O\rbra{\sqrt{m}}$ uses of $O_B$. Therefore, based on the relation $\ket{i}\ket{\Psi_i}\ket{1} = O_A\ket{i}\ket{\vec{0}}$, $O_A$ is deterministically implementable from $O_B$ using $O(\sqrt{m})$ calls.
\end{proof}

By replacing the calls to $O_A$ in Algorithm~\ref{alg: test-similar-means} with this simulation, we obtain the quantum algorithm in the quantum sampling access model stated in Theorem~\ref{thm: upper bound of Oracle B}.
\begin{theorem}\label{thm: upper bound of Oracle B}
     For $\gamma,\epsilon\in\rbra{0,1}$ and a collection of $m$ distributions, there exists a quantum algorithm that can test $\gamma$-similar means in the quantum sampling access model, with probability at least $\frac{2}{3}$ using $\tilde{O}\rbra{\frac{\sqrt{m}}{\epsilon}}$ queries.
\end{theorem}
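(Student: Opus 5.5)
The plan is a direct reduction to the query-model result, Theorem~\ref{thm: upper bound of Oracle A}, using the simulation of Lemma~\ref{lem: construct A by B}.

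\textbf{Step 1: run the query-model algorithm in the sampling model.} Take Algorithm~\ref{alg: test-similar-means} together with its subroutine Algorithm~\ref{alg: QuantumGetBounds}. Replace every use of $O_A$, $O_A^\dagger$ and controlled-$O_A$ by the circuit of Lemma~\ref{lem: construct A by B} built from $O_B$.

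The index register $\ket{i}$ is always a classical value chosen by the algorithm. So for each fixed $i$ the circuit $W_i$ from the lemma satisfies $W_i\ket{\vec 0}=\ket{i}\ket{\Psi_i}\ket{1}$ exactly, since the amplitude $1/\sqrt{m}$ is known and the amplification is deterministic.

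I will also make $W_i$ act as a valid oracle for amplitude estimation and mean estimation. Concretely, I define the state-preparation unitary as $W_i$ itself, acting on the all-zero input. The subroutines QAE and QME only ever need a state-preparation unitary, its inverse, and controlled versions, applied to $\ket{\vec 0}$. The inverse of $W_i$ is built from $O_B^\dagger$, and the controlled version from controlled-$O_B$, both of which are available by definition of quantum sampling access.

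\textbf{Step 2: correctness is unchanged.} The simulated oracle prepares exactly the same states as $O_A$, up to an extra ancilla fixed to $\ket{1}$ and a relabelling of garbage registers. The new garbage states still satisfy $\langle\varphi_{i,j}|\varphi_{i,l}\rangle=\delta_{jl}$.

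Hence the following go through verbatim:
\begin{itemize}
\item the measurement statistics in Step~5 of Algorithm~\ref{alg: QuantumGetBounds};
\item the guarantees of QAE (Lemma~\ref{lem: amplitude estimation}) and QME (Lemma~\ref{lem: mean estimation});
\item the conclusion of Theorem~\ref{thm:l1qeme}.
\end{itemize}
Therefore the acceptance and rejection analysis of Theorem~\ref{thm: upper bound of Oracle A} holds with the same success probability $2/3$. No extra error is introduced, because the simulation is exact rather than approximate.

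\textbf{Step 3: count queries.} Theorem~\ref{thm: upper bound of Oracle A} uses $\tilde O(1/\epsilon)$ applications of $O_A$ or $O_A^\dagger$. Each application costs $O(\sqrt m)$ applications of $O_B$ and $O_B^\dagger$. The total is therefore $\tilde O(\sqrt m/\epsilon)$.

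\textbf{Main obstacle.} The delicate point is that Lemma~\ref{lem: construct A by B} only guarantees correct action on inputs $\ket{i}\ket{\vec 0}$. It does not give a genuine unitary $O_A$ on the full space controlled by a superposed index.

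I would resolve this by observing that Algorithm~\ref{alg: QuantumGetBounds} never queries a superposition of indices. It fixes $i$ and uses only $O_A\ket{i}\ket{\vec 0}$ as the initial state for measurement, QAE and QME, together with reflections about that state. Those reflections can be implemented as $W_i(2\ket{\vec 0}\bra{\vec 0}-I)W_i^\dagger$ with two calls to $W_i$. So it suffices to treat $W_i$ as the state-preparation oracle, and the $O(\sqrt m)$ overhead per use still holds.
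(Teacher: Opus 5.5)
Your proposal is correct and follows the same route as the paper. You replace every use of $O_A$ in Algorithm~\ref{alg: test-similar-means} by the exact $O(\sqrt{m})$-cost simulation of Lemma~\ref{lem: construct A by B}, note that exactness preserves the success probability $2/3$ from Theorem~\ref{thm: upper bound of Oracle A}, and multiply the $\tilde{O}(1/\epsilon)$ query count by $\sqrt{m}$. Your extra observation that only classically fixed indices $i$ are ever queried, so that $W_i$ (with its inverse and controlled versions) can serve directly as the state-preparation unitary for QAE and QME, makes explicit a point the paper's lemma leaves implicit.
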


\begin{proof}

Algorithm~\ref{alg: test-similar-means} is designed for $O_A$. Under quantum sample access model, each query to $O_A$ can be simulated by $O(\sqrt{m})$ queries to $O_B$ using Lemma~\ref{lem: construct A by B}. Since Algorithm~\ref{alg: test-similar-means} uses $\tilde{O}(1/\epsilon)$ queries to $O_A$, the resulting query complexity under $O_B$ is $\tilde{O}(\sqrt{m}/\epsilon)$.

Moreover, the success probability remains at least $2/3$, as the simulation is exact and introduces no additional error.
\end{proof}

\subsection{Lower Bound}

We next obtain a lower bound for the quantum sample access model by reducing from uniformity testing. The corresponding lower bound on the quantum query complexity of uniformity testing was proved in~\cite{chen2025list} and is stated below.

\begin{lemma}[Theorem 2.3 in~\cite{chen2025list}]\label{chen2025listlemma}
    Let $p$ be a probability distributions over $[n]$. Let $\textbf{U}_n$ be the uniform distribution  over $[n]$. Let $U_p$ be the purified quantum query access oracle for the distribution $p$. Then any quantum algorithm that can test whether $p=\textbf{U}_n$ or $\Abs{p-\textbf{U}_n}\geq \epsilon$ requires $\Omega\rbra{n^{1/3}+\frac{n^{1/4}}{\epsilon}}$ queries to $U_p$. 
\end{lemma}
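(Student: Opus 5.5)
The statement is Theorem~2.3 of~\cite{chen2025list}, so in the paper it is simply imported. If I had to prove it, I would prove the two terms $\Omega(n^{1/3})$ and $\Omega(n^{1/4}/\epsilon)$ separately; their maximum is $\Omega(n^{1/3}+n^{1/4}/\epsilon)$. Both parts use the same device. A function $f:[N]\to[n]$ (with $N$ a multiple of $n$) induces the distribution $p_f(i)=|f^{-1}(i)|/N$. A purified oracle for it is built with one query to $f$: prepare $\frac{1}{\sqrt N}\sum_x\ket{x}\ket{0}$ and compute $f(x)$ into the second register, giving $\sum_i\sqrt{p_f(i)}\ket{i}\ket{\psi_i}$, where $\ket{\psi_i}$ is the uniform superposition over $f^{-1}(i)$. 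Hence any $T$-query tester for $p$ gives a $T$-query algorithm for a property of $f$, and lower bounds for that property transfer.

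\textbf{The $n^{1/3}$ term.} Take $N=n$ and reduce from the collision problem. If $f$ is a permutation, then $p_f=\textbf{U}_n$. If $f$ is $2$-to-$1$, then $p_f$ is uniform on $n/2$ points, so its $\ell_1$ distance from $\textbf{U}_n$ is $1\ge\epsilon$. The Aaronson--Shi/Kutin/Ambainis lower bound of $\Omega(n^{1/3})$ queries for deciding permutation versus $2$-to-$1$ then carries over directly.

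\textbf{The $n^{1/4}/\epsilon$ term.} The hard instance is a randomly perturbed distribution. Pair up the elements of $[n]$ and assign each pair the probabilities $(1+\epsilon)/n$ and $(1-\epsilon)/n$ in a uniformly random orientation. Any such $p$ has $\Abs{p-\textbf{U}_n}_1=\epsilon$. The realizing functions are $f:[N]\to[n]$ with $N=kn$, where $k\approx1/\epsilon$ is chosen so that preimage sizes $k(1\pm\epsilon)$ are integers. The goal is to show that no $o(n^{1/4}/\epsilon)$-query algorithm separates a uniformly random $k$-to-$1$ function from a uniformly random function with the perturbed preimage profile.

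I would first try the polynomial method. Symmetrize the acceptance probability over permutations of the domain and of the range, in the style of Aaronson--Shi and Kutin, so that it becomes a low-degree polynomial in the preimage-size profile. Then show that such a polynomial cannot separate the two profiles unless its degree is $\Omega(n^{1/4}/\epsilon)$. As a sanity check on the expected form of the bound: a single perturbed component contributes squared Hellinger distance about $\epsilon^2/n$. Since the orientations are random, the first-order effect cancels and only second-order (collision-type) statistics distinguish the cases, which heuristically gives $\sqrt{n}$ effective "distance" and hence a $\sqrt{\sqrt n/\epsilon^2}=n^{1/4}/\epsilon$ scaling. If the symmetrization becomes unwieldy, I would instead try a general adversary bound, using a weight matrix between uniform-profile and perturbed-profile functions that differ on few domain points.

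\textbf{Main obstacle.} I expect the hard step to be the second lower bound, specifically keeping the exact dependence on both $n$ and $\epsilon$. Belovs' Hellinger-distance bound (Lemma~\ref{lem: distinguishing distributions}) applied to a single fixed $p$ only yields $\Omega(\sqrt n/\epsilon)^{1/2}$-type weak bounds without averaging. The averaging over random orientations is essential. It is also what makes the polynomial/adversary analysis delicate, because one must control cancellations across the $n/2$ independent signs.
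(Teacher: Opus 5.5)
The paper does not prove this statement. It imports it as Theorem~2.3 of \cite{chen2025list}, so your sketch has to stand on its own. The $\Omega(n^{1/3})$ half does. You simulate a purified oracle for $p_f$ with one query to $f$, and inverse and controlled uses of $U_p$ also cost one query each. A permutation gives $\textbf{U}_n$, while a $2$-to-$1$ function gives $\ell_1$ distance $1\ge\epsilon$. The Kutin/Ambainis small-range collision bound then finishes it. This is the standard argument, and combining two separate bounds through their maximum is fine.

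The genuine gap is the $\Omega(n^{1/4}/\epsilon)$ term, which is the only part of the lemma beyond the folklore collision reduction. For it you give a hard instance and a menu of techniques, but no argument.

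\emph{The polynomial method as proposed.} Take the functions with $n/2$ range points of preimage size $k+t$ and $n/2$ of size $k-t$, with $N=kn$ fixed, and symmetrize over domain and range permutations. The acceptance probability becomes a polynomial in $t$ of degree at most $2T$. It is bounded on the $2k+1$ integers $\abs{t}\le k$ and jumps by a constant between $t=0$ and $t=1$. Such constraints can never certify more than $T=O(k)=O(1/\epsilon)$, since any values on $2k+1$ points are interpolated in degree $2k$. They also carry no dependence on $n$. You do not say what extra structure would supply the factor $n^{1/4}$, or what adversary weights you would use.

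\emph{Your aside on Lemma~\ref{lem: distinguishing distributions}.} This is off. For a fixed perturbed $p$, $d_H(\textbf{U}_n,p)=\Theta(\epsilon)$, so the lemma gives $\Omega(1/\epsilon)$, and that is tight for a known $p$. Hence all $n$-dependence must come from the unknown orientation, and no per-instance distance argument can help.

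\emph{The heuristic.} Your scaling $\sqrt{\sqrt n/\epsilon^2}$ is really the claim that queries are at least the square root of Paninski's $\Theta(\sqrt n/\epsilon^2)$ classical sample complexity. That is only one-sided: at constant $\epsilon$ it predicts $n^{1/4}$, whereas the truth is $n^{1/3}$. Turning it into a proof needs one of two things. The first is a lifting theorem that converts any $T$-query purified-access tester into an $O(T^2)$-sample tester, uniformly over the unknown instance. The sample-to-query lifting results of Wang and Zhang are of this kind, but you would have to check that the version you use covers purified (not just block-encoding) access and composite, averaged hypotheses. The second is an explicit adversary or dual-polynomial witness that exploits the random signs. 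Without one of these, the second term is a plausible scaling, not a proof.
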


We now establish the lower bound for testing $\gamma$-similar means, obtained through a reduction from the problem of uniformity testing.

\begin{theorem}\label{thm: lower bound of Oracle B}
    Any quantum algorithm for testing $\gamma$-similar means requires at least $\Omega\rbra{m^{1/3}+\frac{m^{1/4}}{\epsilon}}$ queries in the quantum sampling access model.
\end{theorem}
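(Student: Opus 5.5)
We reduce uniformity testing (Lemma~\ref{chen2025listlemma}) over domain $[m]$ to testing $\gamma$-similar means in the quantum sampling access model. The idea is to let the index register of $O_B$ carry the unknown distribution $p$ over $[m]$, and to choose the $D_i$ so that the collection's means encode how far $p$ is from $\textbf{U}_m$.

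\textbf{Construction.} Given the purified oracle $U_p\ket{\vec{0}}=\sum_{i=1}^m\sqrt{p_i}\ket{i}\ket{\psi_i}$, define
\begin{align*}
O_B':\ket{\vec{0}}\mapsto \sum_{i=1}^m\sqrt{p_i}\ket{i}\ket{\psi_i}\otimes\ket{\chi_i},
\end{align*}
where $\ket{\chi_i}$ is a fixed state encoding a distribution $D_i$ on $\{0,\dots,n\}$. Each application of $O_B'$ uses one application of $U_p$. When $p=\textbf{U}_m$, this is exactly $O_B$ for the collection $(D_1,\dots,D_m)$, with the garbage states absorbed into register 3.

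For general $p$, $O_B'$ is not literally a valid sampling oracle. We fix this by an embedding. Let $M=cm$ for a large constant $c$, and view the amplitude $\sqrt{p_i}$ as spread over $c$ copies of index $i$. Attach to the copies of $i$ a two-point distribution on $\{0,n\}$ whose weight on $n$ is a function $g(p_i)$, implemented by a controlled rotation. Concretely, rewrite
\[
\sum_i\sqrt{p_i}\ket{i}=\sum_i\frac{1}{\sqrt m}\ket{i}\bigl(\sqrt{mp_i}\ket{\text{good}}+\dots\bigr)
\]
to obtain a uniform index register together with a per-index two-point distribution. This step needs care: $mp_i$ may exceed $1$, so we first mix $p$ with uniform, $p'=(p+\textbf{U}_m)/2$, which preserves the $\ell_1$ distance up to a factor $2$ and gives $mp'_i\le (m+1)/2$. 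A cleaner approach is to take $D_i$ over $\{0,n\}$ with mass $q_i=\min(1,mp'_i/ K)$ on $n$, for a constant $K$.

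\textbf{Yes and no instances.} In the uniform case all the $D_i$ are identical, so the means coincide and the collection has $\gamma$-similar means. In the far case, $\|p-\textbf{U}_m\|_1\ge\epsilon$ forces an $\Omega(\epsilon)$ fraction of the mass to lie on indices with $mp_i$ deviating by a constant from $1$, in both directions. This gives $\Omega(\epsilon m)$ total $\ell_1$ mean deviation on each side of any window of length $\gamma n$ (taking $\gamma$ small relative to the encoding constant). Hence the collection is $\Omega(\epsilon)$-far, by the converse direction of the characterization in Lemma~\ref{lem: property of far from}. For two-point distributions on $\{0,n\}$ we have $d_1(D,z)=2|\mu(D)-z|/n$, which makes this direct.

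A tester using $T$ queries to $O_B$ therefore yields a uniformity tester with $O(T)$ queries to $U_p$. Lemma~\ref{chen2025listlemma} with domain size $m$ then gives $T=\Omega(m^{1/3}+m^{1/4}/\epsilon)$.

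\textbf{Main obstacle.} The hard step is making the constructed oracle an honest $O_B$, that is, with exactly uniform index amplitudes, while still encoding $p$. I would first try to avoid it entirely. The lower-bound argument of~\cite{chen2025list} uses a specific hard family (random perturbations $p_i=(1\pm\epsilon)/m$ on paired indices). I would redo that adversary or polynomial-method argument directly for the collections $D_i^{\pm}$, with $D_i(n)=\tfrac12\pm\epsilon$ and $D_i(0)=\tfrac12\mp\epsilon$, whose sampling-oracle states $\frac{1}{\sqrt m}\sum_i\ket{i}\ket{\Psi_i^{\pm}}$ have the same overlap structure as the states $\sum_i\sqrt{p_i}\ket{i}$ of that family. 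The far instances are $\Omega(\epsilon)$-far: half the distributions have mean $n(\tfrac12+\epsilon)$ and half have mean $n(\tfrac12-\epsilon)$, which separates them by more than $\gamma n$ when $\gamma<\epsilon$. This matching of overlap structure is the step I expect to need the most verification.
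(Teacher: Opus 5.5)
Your first route has a genuine gap, at the step you flag. The identity $\sum_i\sqrt{p_i}\ket{i}=\sum_i\frac{1}{\sqrt m}\ket{i}\bigl(\sqrt{mp_i}\ket{\mathrm{good}}+\dots\bigr)$ is only algebra. The complementary amplitude on index $i$ does not exist in $U_p\ket{\vec 0}$. A ``controlled rotation'' by an angle depending on $p_i$ cannot be applied either, because $p_i$ is not stored in any register, and learning it to the needed precision costs about $\sqrt m/\epsilon$ queries per use. What you can implement with $O(1)$ queries is, for example: prepare an independent uniform $i$, draw $j$ via $U_p$, and output $n$ iff $j=i$. This gives $D_i(n)=p_i$, so a $\pm\epsilon/m$ perturbation moves each mean by only $O(\epsilon n/m)$, and the collection is only $O(\epsilon/m)$-far. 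Your far-case analysis is also wrong. $\Abs{p-\mathbf U_m}_1\ge\epsilon$ does not force constant deviations of $mp_i$; in the paired instance they are $\pm\epsilon$. Moreover, a $p$ with extra mass $\approx\epsilon$ on a single index is mapped by $q_i=\min(1,mp'_i/K)$ to a collection in which $m-1$ of the $D_i$ coincide, which is only $O(1/m)$-far. So Lemma~\ref{chen2025listlemma}, a worst-case statement over all $\epsilon$-far $p$, cannot be applied as a black box through such a map. Also, Lemma~\ref{lem: property of far from} has no converse; for two-point distributions, compute farness directly from $d_1(D,z)=2\abs{\mu(D)-z}/n$.

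Your fallback is the right construction and is essentially the paper's. For two-point distributions on $\cbra{0,n}$, $O_B$ is literally a purified oracle for a distribution on $[m]\times\cbra{0,n}\cong[2m]$; copying register 2 into register 3 supplies the orthogonal $\ket{\varphi_{i,j}}$. Your random balanced signs $\sigma_i$ are necessary, and the paper lacks them. The paper uses the same $D'_i$ for every $i$, so its $\mathcal D'$ has all means equal and is not far at all. Its $P_\epsilon$ is a single fixed distribution, which amplitude estimation of $\Pr[j\le m]$ separates from $\mathbf U_{2m}$ with $O(1/\epsilon)$ queries. With your signs the no-instances are $(2\epsilon-\gamma)$-far.

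However, the claim everything rests on is not proved, and Lemma~\ref{chen2025listlemma} as stated does not imply it. That claim is: distinguishing $\mathbf U_{2m}$ from a random member of this family, with the pairing $\cbra{(i,0),(i,n)}$ fixed and known and each pair carrying mass exactly $1/m$, needs $\Omega(m^{1/3}+m^{1/4}/\epsilon)$ queries. This is a genuinely restricted problem. The usual $m^{1/3}$ hard instances are collision-type, uniform on an arbitrary random half of the domain, and need not respect the pairing. So the adversary or polynomial-method argument must be redone, and neither you nor the paper does this.

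Finally, your instances need $\gamma\le c\epsilon$. For larger $\gamma$ the yes-instance would have to place two clusters of means about $\gamma n$ apart, as the anchor $D_3$ does in the paper's query-model bound. That turns the problem into identity testing against a known non-uniform base, which needs its own lower bound.
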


\begin{proof}
    Let $P_\epsilon$ be a probability distributions over $[2m]$ such that $P_\epsilon(j)=\frac{1+\epsilon}{2m}$ for $j\in\cbra{1,...,m}$, $P_\epsilon(j)=\frac{1-\epsilon}{2m}$ for $j\in\cbra{m+1,...,2m}$. Let $\textbf{U}_{2m}$ be the uniform distribution over $[2m]$. Then by Lemma~\ref{chen2025listlemma} the lower bound of the quantum query complexity for distinguishing between $P_\epsilon$ and $\textbf{U}_{2m}$ is $\Omega\rbra{m^{1/3}+\frac{m^{1/4}}{\epsilon}}$.

    Consider two collections of distributions $\mathcal{D}=\rbra{D_1,\ldots,D_m}$ and
    $\mathcal{D'}=\rbra{D'_1,\ldots,D'_m}$ where for every $i\in\sbra{m}$ $D_i(0)=D_i(n)=\frac{1}{2}$  and $D'_i(0)=\frac{1+\epsilon}{2}$, $D'_i(n)=\frac{1-\epsilon}{2}$. It is easy to verify that $\mathcal{D}$ has similar means whereas $\mathcal{D'}$ is $\epsilon$-far from this property. Given the quantum sampling access to $\mathcal{D}$ and $\mathcal{D'}$, we can simulate the purified quantum query access oracles of $P_\epsilon$ and $\textbf{U}_{2m}$. Therefore, the lower bound of distinguishing between $P_\epsilon$ and $\textbf{U}_{2m}$ implies the lower bound of testing similar means,  namely $\Omega\rbra{m^{1/3}+\frac{m^{1/4}}{\epsilon}}$.
\end{proof}

\section{Conclusions}

In this work, we developed quantum algorithms for testing mean similarity of multiple distributions. Our algorithms achieve provable quantum speedups in both the query and sampling access models, accompanied by nearly matching lower bounds.

Several questions remain open. In particular, the dependence on the number of distributions $m$ in the sampling model is not yet fully understood. It would also be interesting to extend quantum techniques to other multi-distribution testing problems, such as uniformity and closeness testing.

\bibliographystyle{alpha}
\bibliography{tsm}
\end{document}